\documentclass[11pt,a4paper]{article}

\usepackage[T1]{fontenc}
\usepackage{lmodern}
\usepackage[margin=1.65cm, top=1.75cm, bottom=1.85cm,
            headheight=13pt, headsep=12pt, footskip=16pt]{geometry}
\usepackage{amsmath, amssymb, amsthm}
\usepackage{graphicx}
\usepackage{booktabs}
\usepackage{array}
\usepackage{caption}
\usepackage{subcaption}
\usepackage[hidelinks, colorlinks=true, linkcolor=blue!60!black,
            citecolor=blue!60!black, urlcolor=blue!60!black]{hyperref}
\usepackage[round, authoryear]{natbib}
\usepackage{xcolor}
\usepackage{soul}          
\sethlcolor{yellow}
\usepackage{float}
\usepackage{multirow}
\usepackage{fancyhdr}
\usepackage{titlesec}
\usepackage{titling}
\usepackage{bm}
\IfFileExists{bbm.sty}{\usepackage{bbm}}{}
\providecommand{\mathbbm}[1]{\mathbb{#1}}
\newcounter{ALCline}
\IfFileExists{algorithm.sty}{%
  \usepackage{algorithm}\usepackage{algpseudocode}%
}{%
  \newfloat{algorithm}{htbp}{loa}\floatname{algorithm}{Algorithm}%
  \newenvironment{algorithmic}[1][0]{%
    \begin{list}{\footnotesize\arabic{ALCline}:}{\usecounter{ALCline}%
      \setlength{\leftmargin}{2.2em}\setlength{\itemsep}{1pt}\setlength{\topsep}{2pt}}}%
    {\end{list}}%
  \newcommand{\State}{\item}%
  \newcommand{\For}[1]{\item \textbf{for}\ ##1\ \textbf{do}}%
  \newcommand{\EndFor}{\item \textbf{end for}}%
  \newcommand{\Return}{\textbf{return}\ }%
  \newcommand{\Comment}[1]{\hfill{\footnotesize$\triangleright$\ ##1}}%
}

\makeatletter
\newcommand{\blfootnote}[1]{%
  \begingroup
  \renewcommand{\thefootnote}{}%
  \renewcommand{\@makefnmark}{}%
  \footnotetext{#1}%
  \endgroup
}
\makeatother

\pretitle{\begin{center}\fontsize{17}{20}\selectfont\bfseries}
\posttitle{\par\end{center}\vspace{0.4em}}
\preauthor{\begin{center}}
\postauthor{\par\end{center}\vspace{0.4em}}
\predate{}
\postdate{}

\titleformat*{\section}{\large\bfseries}
\titleformat*{\subsection}{\normalsize\bfseries}
\titleformat*{\subsubsection}{\normalsize\itshape}
\titlespacing*{\section}{0pt}{9pt}{4pt}
\titlespacing*{\subsection}{0pt}{7pt}{3pt}
\titlespacing*{\subsubsection}{0pt}{5pt}{2pt}

\fancypagestyle{plain}{%
  \fancyhf{}%
  \fancyhead[R]{\small\thepage}%
  \fancyhead[L]{\small\itshape Regulus Metrics: A T20 Batting and Bowling Impact Framework}%
}

\newtheoremstyle{finding}%
  {6pt}{6pt}{\itshape}{0pt}{\bfseries}{.}{0.5em}{}
\theoremstyle{finding}
\newtheorem{finding}{Finding}

\newtheoremstyle{propstyle}%
  {5pt}{5pt}{\itshape}{0pt}{\bfseries}{.}{0.5em}{}
\theoremstyle{propstyle}
\newtheorem{proposition}{Proposition}
\newtheorem{remark}{Remark}

\newcommand{\E}{\mathbb{E}}
\newcommand{\Var}{\operatorname{Var}}

\newcommand{\sd}{\operatorname{sd}}

\newcommand{\RAE}{\mathrm{RAE}}                    
\newcommand{\val}{v}                               
\newcommand{\realDAR}{\mathrm{realDAR}}            
\newcommand{\xDAR}{\mathrm{xDAR}}                  
\newcommand{\DAR}{\mathrm{DAR}}                    
\newcommand{\wc}{c}                                
\newcommand{\Vfun}{V}                              
\newcommand{\OmegaBat}{\Omega^{(\mathrm{Bat})}}
\newcommand{\OmegaBowl}{\Omega^{(\mathrm{Bowl})}}

\title{Context-adjusted Player Evaluation for Twenty20 Cricket}
\author{%
  \large Rhitankar Bandyopadhyay\\[0.35em]
  \normalsize University of Florida, Gainesville, FL~32611, USA.
  Email: \href{mailto:r.bandyopadhyay@ufl.edu}{r.bandyopadhyay@ufl.edu}%
}
\date{}

\begin{document}
\maketitle

\blfootnote{This work is part of the development of an app \textsc{Regulus}. The \textsc{Regulus}
platform and its constituent metrics are proprietary at this point; all methodology described
herein is for documentation and scholarly communication only.}

\thispagestyle{empty}

\begin{abstract}
\noindent
I develop a reproducible framework for evaluating individual batting and bowling performances in Twenty20 (T20) cricket on one interpretable scale of runs above expectation, built from two ball-level primitives. The first, \emph{Runs Above Expected} ($\RAE$), is the residual between the runs scored on a delivery and a contextual expectation of what an average performer would produce in the same situation. That expectation is a multiplicative log-linear model of the cohort scoring rate, whose per-cell estimator is shown to be a conditional Poisson maximum likelihood multiplier. It is fitted by iterated backfitting and stabilised by empirical Bayes shrinkage, so that thinly sampled contexts are pooled towards the population. A single opposition symmetry places run-scoring and run-prevention on the same footing. The second primitive, \emph{Dismissal Adjusted Runs} ($\DAR$), prices a dismissal in that currency as the runs it forgoes, read off a batting side value function solved by dynamic programming through a Bellman expectation recursion, under observed play. Because dismissal is the expected end of every innings, the realised wicket cost ($\realDAR$) is centered against its expectation ($\xDAR$) under a league dismissal hazard rate. The centered quantity is a run-weighted mean zero martingale residual of the dismissal process, so a player is charged only for departing from average behaviour. The two primitives sum to a symmetric Impact, which makes batting and bowling comparable in centre as well as in unit. Estimated on over 2.7 million legal deliveries of men's T20 cricket, the framework recovers known contextual structure, agrees with the conventional rates it refines while correcting their context-blindness, and yields face-valid player, innings and season leaderboards for the Indian Premier League.
\end{abstract}

\vspace{2pt}
\noindent\textit{Keywords}: Bellman expectation, Dynamic programming, Impact metrics, Player evaluation, Run expectancy, Sports analytics.

\vskip 18mm

\section{Introduction}
\label{sec:intro}

Twenty20 (T20) is the shortest and most widely watched form of professional cricket. Each side bats
for a single innings of twenty overs, i.e.\ at most $120$ legal deliveries, so scoring is compressed and
unforgiving. A handful of balls can decide a match, and every delivery carries a visible trade-off
between runs and the risk of dismissal. Since its introduction, the format has grown into a global
professional circuit of franchise leagues, the Indian Premier League (IPL) foremost among them, alongside the Big Bash League (BBL), the Caribbean Premier League (CPL), the SA20 to name a few, in
which recruitment, selection and in-game tactics turn on large sums of money and on increasingly
fine grained data. Every ball of professional T20 matches is now recorded, which makes the quantitative
evaluation of individual batting and bowling both possible and consequential, and it is that
evaluation this paper develops.

The rest of this section sets the context:
Section~\ref{sec:intro:analysis} discusses performance analysis in T20 and why raw statistics are
inadequate for it, Section~\ref{sec:intro:lit} reviews the relevant literature across cricket and
other sports, and Section~\ref{sec:intro:contrib} states our contribution. The data are described in
Section~\ref{sec:data}; Section~\ref{sec:rae} formulates the first primitive, Runs Above Expected;
Section~\ref{sec:dar} constructs the second of them, Dismissal Adjusted Runs, through a batting-side value function;
Section~\ref{sec:composite} combines them into a single Impact rating;
Section~\ref{sec:validation} reports the empirical properties and validation; and
Section~\ref{sec:limits} concludes.

\subsection{Performance analysis in T20 cricket}
\label{sec:intro:analysis}

A T20 innings unfolds in three broadly recognised phases. Unless rain or some other interruption shortens the match, the `powerplay' of the first six
overs restricts the fielding side and rewards aggression; the `middle overs' usually see
consolidation, often against spin, and the `death overs' at the end of the innings are played
at maximum risk. Batting and bowling therefore mean very different things at different points of an
innings, and against different opponents and conditions, and this is what makes individual evaluation
difficult. The conventional measures, namely a batter's average (runs per dismissal) and strike rate (runs
per hundred balls), a bowler's average (runs per dismissal), economy rate (runs per $6$ balls), and strike rate (balls per dismissal), aggregate over this
heterogeneity and are systematically confounded by it. A strike rate of $150$ is unremarkable at the
death and exceptional in the middle overs, an economy around $8$ is expensive in the middle overs and
cheap at the death, a boundary off a good bowler is worth more than the same boundary off a
part-timer, and a dismissal early in an innings costs a side far more than the same dismissal towards its
end. None of these situational factors are visible in the raw numbers. Two further problems compound it. Batting and
bowling are reported on incommensurable scales, i.e.\ runs per ball (or $100$ balls) against runs per over, so the two
roles cannot be placed on a common footing, which matters acutely for all-rounders and for
squad-level decisions, and a raw average credits occupation of the crease without asking what the
runs were worth in context. Yet these are the distinctions performance analysis exists to draw, since selection, auction valuation, match-ups and in-game strategy all turn on what a player
actually contributes, net of the situation he found himself in. A metric fit to inform such decisions
must adjust for context at the level of the individual ball, must reward run-scoring and
run-prevention on the same scale as the wickets that punctuate them, and must stay interpretable
enough to be trusted and reproducible enough to be checked.

\subsection{Review of literature}
\label{sec:intro:lit}

Quantitative cricket analysis has long rested on the idea of a run expectancy, a function that
assigns to each state of an innings the runs a side can expect from it. Its best known instance is
the Duckworth-Lewis method for resetting targets in interrupted matches \citep{duckworth1998}, whose
resource table is exactly such a function of the overs and wickets a side has in hand. A parallel
line used dynamic programming to derive optimal play: \citet{clarke1988} computed optimal
scoring rates in one-day cricket, and \citet{preston2000} formulated optimal batting strategy against a target.
The same machinery underlies the modern win-and-score predictor WASP, built by \citet{brooker2011} on
a dynamic programming run expectancy estimated from historical limited-overs data, while later
statistical models describe how scoring and dismissal risk evolve within an innings rather than
optimise them \citep{stevenson2021}. The value function of the present paper belongs to this
tradition, but is put to a different use. I solve it under observed play, to price the runs a
wicket forgoes, not to prescribe how a side should bat.

Turning from game states to players, early evaluation measures refined the batting average and strike
rate into single indices \citep{lemmer2004}, but these inherit the context-blindness of their
ingredients. The closest antecedent of our work is the simulation approach of
\cite{davis2015player} and \cite{davis2015simulator}, who built a ball-by-ball T20 simulator and, from it, an \emph{expected run differential}: the runs a player adds relative to a standard player in the same role, obtained by re-simulating innings with and without him. This captures context and reports contributions in runs, but it depends on a full generative simulator and a role-defined replacement, and it does not yield a transparent per-ball decomposition. Other work rates players through Bayesian or machine-learning models and data-envelopment indices, or models player valuation specifically for the IPL auction. These gain flexibility at the cost of interpretability, and typically treat batting and bowling on separate scales.

Alongside the academic literature, several proprietary systems target the same quantity we
do: a context-adjusted, unified measure of impact, but behind closed models. Cricinfo's Smart
Stats\footnote{Read articles and illustrations on Cricinfo's Smart Stats here: \href{https://www.cricinfo.com/genre/superstats-706}{https://www.cricinfo.com/genre/superstats-706}} report \emph{Smart Runs} (runs re-weighted by match context) and
\emph{Smart Wickets} (a dismissal valued by the batter removed, its timing, and the state of the
game) and combine the two into a single `Player Impact' on a runs scale,
the same design goal as ours. CricViz derives expected runs, expected wickets and expected dismissals
from ball-tracking data, and independent analysts have proposed replacement-level
measures such as Runs Above Average Replacement\footnote{Read here: \href{https://www.cricinfo.com/story/using-the-runs-above-average-replacement-metric-to-assess-the-quality-of-test-batsmen-1226970}{https://www.cricinfo.com/story/using-the-runs-above-average-replacement-metric-1226970}}, a cricketing analogue of baseball's value over
replacement. These efforts are influential, but their mechanisms are largely
unpublished, they often rely on proprietary or tracking data unavailable for most matches, and they
are validated only informally, so they can be neither reproduced nor audited by others.

The underlying idea of valuing an event by its expected contribution to a scoring currency, above what
an average or replacement performer would have produced, is mature in several other sports. Baseball's
run expectancy and win probability accounting descends from \citet{lindsey1963} and now underpins
reproducible wins-above-replacement systems \citep{baumer2015}. In association football, expected
goals grade a chance by its scoring probability, and action value frameworks such as VAEP value every
on-the-ball action by the change it induces in the probabilities of scoring and conceding
\citep{decroos2019}. In basketball, the expected possession value model prices the state of a
possession by the points it is expected to yield \citep{cervone2016}. All of these share one logic: price
events in the currency of the game, relative to expectation, which the present framework brings to
T20 cricket.

The cricketing literature either optimises play rather than evaluating it, evaluates
players through simulators or opaque models that are hard to reproduce, or reports the two roles on
scales that do not meet, and the systems that come closest to a unified impact measure are
proprietary. What is missing is a transparent, likelihood based framework that adjusts for context
at the level of the individual ball, prices wickets in the same runs currency as the runs themselves,
and places batting and bowling on one comparable scale, reproducible in full from public data. This
paper provides it.

\subsection{Contribution of the work}
\label{sec:intro:contrib}

I build the framework from two ball-level primitives, both expressed in runs above expectation and
shared symmetrically between the two roles. The first, \emph{Runs Above Expected} ($\RAE$), is the
residual between the runs scored on a delivery and a contextual expectation of what an average
performer would have produced in the same situation, specified through phase, innings, era, fall of wickets, bowler type, venue and the specific opponent. That expectation is a multiplicative, log-linear model of the cohort scoring rate whose per-cell estimator I show to be a conditional Poisson maximum likelihood multiplier, fitted by iterated backfitting and stabilised by empirical Bayes shrinkage, so that
thinly sampled contexts and players are pooled towards the population rather than trusted blindly. A
single opposition symmetry credits a batter against the bowler he faced and a bowler against the
batter he contained, which places run-scoring and run-prevention on one scale. The second primitive,
\emph{Dismissal Adjusted Runs} ($\DAR$), prices a dismissal in that same currency. A batting-side value function,
solved by dynamic programming under observed play, gives the runs a wicket forgoes in a given state.
But charging a batter the full cost of every dismissal would penalise even an average player, since
being dismissed is the expected end of an innings. I therefore centre the realised wicket cost
against its expectation under a corresponding league dismissal hazard rate, which yields a quantity that is exactly a run-weighted martingale residual of the dismissal process, mean zero under average behaviour, so a player is charged only for departing from it. Summed over a career, the two primitives give a symmetric Impact, batting and bowling differing only in sign, on a single interpretable runs scale.

Three parts of this are new. The first is reproducibility. Every multiplier, hazard and shrinkage
strength is estimated from public ball-by-ball data, and there is no tuning constant left to set,
whereas the proprietary impact systems do not disclose their models at all. The second is the
pricing of a wicket as a centered martingale residual, which is to our knowledge new to cricket.
It is what puts a batter's and a bowler's Impact on a common centre and not only a common unit:
an average performer of either role sits near zero, and the long-standing difficulty of placing
the two roles on one footing goes away with it. The third is that both primitives are additive at
the ball level, so the same construction reads coherently from a single delivery to a long career
without re-fitting anything. I fit the framework on more than $2.7$ million legal T20 deliveries
and report the contextual structure it recovers, the agreement of each primitive with the
conventional rate it refines, and leaderboards and case studies for the IPL.

\section{Data}
\label{sec:data}

The analysis uses a ball-by-ball dataset across men's T20 cricket spanning from February
$2005$ to June $2026$ (data last updated as of June $05$, $2026$), drawn from $19$ competitions including T20 Internationals and several major domestic franchise and national leagues. Table~\ref{tab:dataset} summarises the dataset. The unit of analysis is the legal delivery (\texttt{is\_legal\_ball} $=$ True, wides and no-balls excluded).

\begin{table}[!ht]
\centering
\caption{Data summary and coverage. The dataset has two tiers: a \emph{baseline} record present for every delivery and a \emph{ball-tracking} record available only for a section of matches. Qualification thresholds count players by legal balls, $n_i$.}
\label{tab:dataset}
\small
\begin{tabular}{lr}
\toprule
Quantity & Value \\
\midrule
\multicolumn{2}{l}{\emph{Entire dataset}}\\
Total deliveries                              & 2{,}819{,}700 \\
Legal deliveries                              & 2{,}727{,}420 \\
Matches                                       & 12{,}492 \\
Distinct competitions                         & 19 \\
Date range                                    & Feb 2005 - May 2026 \\
IPL deliveries\,/\,matches                    & 295{,}629\,/\,1{,}243 \\
\midrule
\multicolumn{2}{l}{\emph{Coverage of the detailed (ball-tracking) tier}}\\
Matches with any stroke annotation            & 2{,}128 \quad(17.0\%\ of matches) \\
Deliveries with stroke annotation             & 448{,}751 \quad(16.5\%\ of legal) \\
Deliveries with line/length                   & 924{,}134 \quad(33.9\%\ of legal) \\
Baseline-only (Cricsheet) matches             & 10{,}364 \quad(83.0\%\ of matches) \\
\midrule
\multicolumn{2}{l}{\emph{Qualifying players}}\\
Qualifying batters, $n_i \geq 30/100/500$     & 6{,}234\,/\,3{,}800\,/\,1{,}220 \\
Qualifying bowlers, $n_i \geq 60/200$         & 4{,}607\,/\,2{,}516 \\
\bottomrule
\end{tabular}
\end{table}

\paragraph{Two tiers of detail.}
The dataset is deliberately heterogeneous in granularity, so I state exactly what is available where. Each of the $12{,}492$ matches
($2{,}727{,}420$ legal deliveries) carries a baseline record derived from public Cricsheet repositories: the batter and bowler identities, the key components of model~\eqref{eq:rae_model} (\texttt{phase}, \texttt{innings}, \texttt{date},
\texttt{wickets\_fallen}, \texttt{bowler\_type\_grp}, \texttt{ground}), the outcome
fields \texttt{runs\_off\_bat}, \texttt{is\_boundary}, \texttt{is\_dot}, bowler-credited dismissal flag and the chase-state components (runs, balls, wickets remaining, target at any point during 2nd innings) used by the second-innings value function of Section~\ref{sec:dar}. A smaller \emph{ball-tracking}
tier adds the delivery's line and length, stroke type, stroke magnitude, stroke direction, elevation etc. This richer information is present for only a minority of the data: stroke annotation covers $448{,}751$ legal deliveries ($16.5\%$) across $2{,}128$ matches ($17.0\%$)
and line and length data covers $924{,}134$ deliveries ($33.9\%$), so the majority of matches ($10{,}364$ or $83\%$) carry the baseline record alone. Coverage is also
uneven within the tier: some matches are annotated ball for ball while others are only partially tracked, which is why I report ball counts rather than treating any match as wholly tracked.

This heterogeneity forces a design choice. Both primitives, $\RAE$ (Section~\ref{sec:rae}) and Dismissal Adjusted Runs (Section~\ref{sec:dar}), use only baseline fields, so they are computed identically on every match irrespective of ball-tracking availability. Restricting the covariates to those recorded for every delivery keeps the sample pool as large as possible, lets a player be compared across eras, venues and match situations on one consistent baseline, and keeps the framework reproducible from public data. The richer ball-tracking covariates (line, length, stroke geometry etc.) are therefore not used by the metrics defined here. A fairer batting evaluation would, however, credit run-scoring relative to the exact delivery faced. I accordingly extend the expected runs model to incorporate the delivery's line and length on the annotated subset, and formulate a modified $\RAE$ in Section~\ref{sec:supp:mrae} of the supplementary materials. No metric silently degrades on a baseline-only match; each is fully defined from fields recorded everywhere.

\section{The `Runs Above Expected' (\texorpdfstring{$\RAE$}{RAE}) model}
\label{sec:rae}

$\RAE$ is the foundation of the framework, and every other metric is a functional of the ball-level $\RAE$ series. The construction has three parts: a contextual expectation model (Section~\ref{sec:rae:model}), its estimation by iterated, shrunk backfitting (Section~\ref{sec:rae:fit}-\ref{sec:rae:shrink}) and the role-specific opposition overlay followed by aggregation (Section~\ref{sec:rae:agg}).

\subsection{The contextual expectation model}
\label{sec:rae:model}

Let $r_b\in\{0,1,2,3,4,6\}$ be the runs off the bat on legal delivery $b$ and let $\bm{x}_b$ denote its context. I model the cohort expected runs, defined as what an average batter would score (for batting), and an average bowler would concede (for bowling), in that situation, as a product of a global base rate and contextual multipliers:
\begin{equation}
\label{eq:rae_model}
\mu(\bm{x}_b)=\mu_0\,
\gamma^{\mathrm{scen}}_{c_1(b)}\,
\gamma^{\mathrm{era}}_{c_2(b)}\,
\gamma^{\mathrm{wkt}}_{c_3(b)}\,
\gamma^{\mathrm{type}}_{c_4(b)}\,
\gamma^{\mathrm{ven}}_{c_5(b)}\,
\gamma^{\mathrm{opp}}_{c_6(b)} ,
\end{equation}
where $\mu_0$ is the grand mean over all legal deliveries, each of $\gamma^{\mathrm{scen}}_{c_1(b)}$, $\gamma^{\mathrm{era}}_{c_2(b)}$, $\gamma^{\mathrm{wkt}}_{c_3(b)}$, $\gamma^{\mathrm{type}}_{c_4(b)}$, $\gamma^{\mathrm{ven}}_{c_5(b)}$ is a non-negative multiplier attached to a categorical cell of the corresponding factor, and $\gamma^{\mathrm{opp}}_{c_6(b)}$ is the opposition factor, defined as the opposition bowler's multiplier for batter evaluation and the opposition batter's multiplier for bowler evaluation
(Section~\ref{sec:rae:agg}). The five shared factors stated in (\ref{eq:rae_model}) are:
\begin{center}\small
\begin{tabular}{lll}
\toprule
Factor & Cell key $c_f(b)$ & Captures \\
\midrule
Scenario   & (phase, innings)               & situational scoring level \\
Era        & (year, phase)         & temporal scoring inflation \\
Wicket     & (phase, wickets fallen)        & fall-of-wickets pressure \\
Bowler type& (phase, pace/spin sub-type)    & match-up difficulty by phase \\
Venue      & (ground, 3-year bucket)        & pitch, via temporal locality \\
\bottomrule
\end{tabular}
\end{center}
The wicket factor is conditioned on the exact number of wickets fallen ($0$ to $10$), so the fall-of-wickets gradient is estimated at full resolution and the empirical Bayes shrinkage of Section~\ref{sec:rae:shrink} pools the sparse deep-collapse cells automatically; the venue bucket groups consecutive 3-year windows at each ground. Taking logs on both sides,
\eqref{eq:rae_model} is a main-effects log-linear model,
\begin{equation}
\label{eq:rae_loglin}
\log\mu(\bm{x}_b)=\log\mu_0+\sum_{f}\log\gamma^{(f)}_{c_f(b)},
\end{equation}
on crossed categorical keys; the crosses (e.g.\ phase$\times$innings,
year$\times$phase etc.) admit a controlled set of two-way interactions while keeping the
parameter count tractable.

\begin{remark}[Difficulty parameter lives in the expectation]
\label{rem:nophaseweight}
Phase difficulty is captured \emph{inside} $\mu(\bm{x}_b)$, principally by the
scenario factor $\gamma^{\mathrm{scen}}_{(\text{phase},\text{innings})}$ (and refined by era, wicket and bowler-type factors, each phase-conditioned). A run in a higher-scoring phase is therefore measured against a higher bar automatically. No separate phase weight is applied to either role, $\RAE$ therefore acts as a pure residual on the run scale.
\end{remark}

The factors of \eqref{eq:rae_model} are not chosen by convention alone; each of them carries strong,
independent signal. Table~\ref{tab:anova} reports an analysis of deviance for the fitted
expected runs model: dropping any single factor from the converged fit raises the Poisson deviance
of Proposition~\ref{prop:poisson} by the stated amount, on its degrees of freedom. At this sample
size, every factor is significant far beyond any conventional threshold ($p<10^{-300}$), so the
deviance each explains is read not as a test but as a measure of relative importance. The venue and
the fall of wickets state carry the most, the phase-innings scenario and opposition bowler identity
follow, and even the smallest, the pace or spin sub-type of a bowler, remains decisive. The symmetric
opposition batter factor of the bowler evaluation fit is significant to the same degree.

\begin{table}[!ht]
\centering
\caption{Analysis of deviance for the expected runs model. Each row gives the rise in Poisson
deviance ($\chi^2$) when the factor is dropped from the converged fit, on its degrees of freedom;
all factors are significant beyond any usual threshold.}
\label{tab:anova}
\small
\begin{tabular}{lrrr}
\toprule
Factor & d.f. & $\Delta$Deviance ($\chi^2$) & $p$ \\
\midrule
\multicolumn{4}{l}{\emph{Always-available model~\eqref{eq:rae_model} \qquad \qquad ($n=2{,}727{,}420$ legal deliveries)}}\\
\midrule
Scenario\ (phase $\times$ innings)         & \ \ \ 7 & 21{,}686 & $<10^{-300}$ \\
Era\ (year $\times$ phase)                 & \ \ 68 & \ 8{,}611 & $<10^{-300}$ \\
Wicket\ (phase $\times$ wickets fallen)    & \ \ 32 & 27{,}972 & $<10^{-300}$ \\
Bowler type\ (phase $\times$ sub-type)     & \ \ 20 & \ 2{,}812 & $<10^{-300}$ \\
Venue\ (ground $\times$ 3-yr bucket)       & 1{,}267 & 41{,}767 & $<10^{-300}$ \\
Opposition bowler\ (identity)              & 7{,}281 & 27{,}545 & $<10^{-300}$ \\
\midrule
\multicolumn{4}{l}{\emph{Additional ball-tracking covariates\quad(annotated subset, $n=804{,}308$)}}\\
\midrule
Length\ (phase $\times$ length)            & \ \ 20 & 41{,}924 & $<10^{-300}$ \\
Line\ (phase $\times$ line)                & \ \ 20 & \ 3{,}409 & $<10^{-300}$ \\
\bottomrule
\end{tabular}
\end{table}

Beyond the covariates in ~\eqref{eq:rae_model}, the delivery's line and length are the most informative
additions the data offer. On the annotated subset, length alone explains a deviance rivalling that
of the venue factor and line a smaller but still overwhelming amount (lower block of
Table~\ref{tab:anova}). I keep them out of the headline model for the comparability reasons of
Section~\ref{sec:data}, but develop the resulting extension, a modified $\RAE$, and quantify its effect in
Section~\ref{sec:supp:mrae} of the Supplementary Materials.

\begin{figure}[!ht]
\centering
\includegraphics[width=\linewidth]{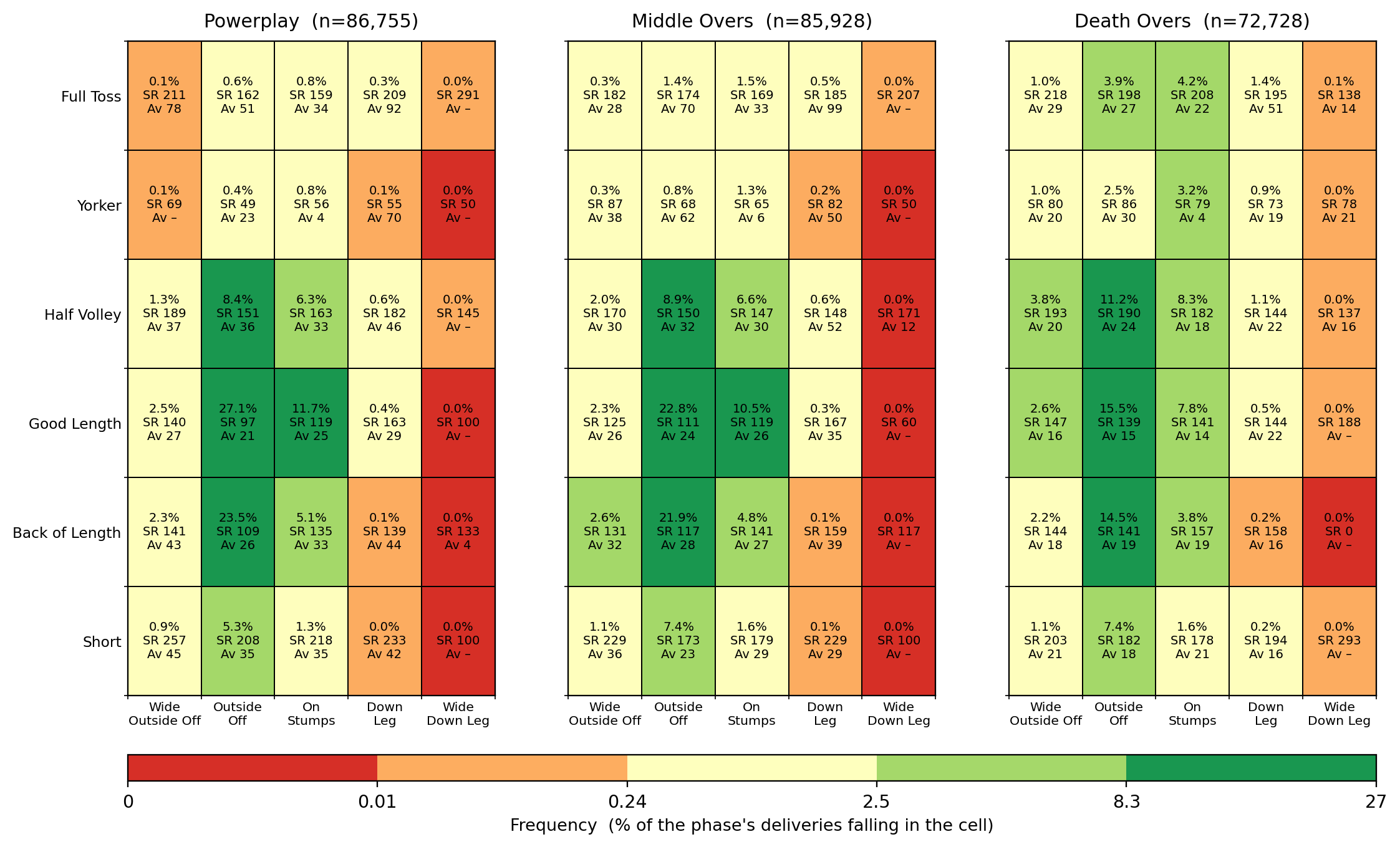}
\caption{Line$\times$length structure for right-hand batters against right-arm pace, by phase. Each
cell reports the frequency (percentage of the phase's deliveries pitching there), strike rate
and batting average in that cell.}
\label{fig:pitchmap}
\end{figure}

\begin{figure}[!ht]
\centering
\includegraphics[width=\linewidth]{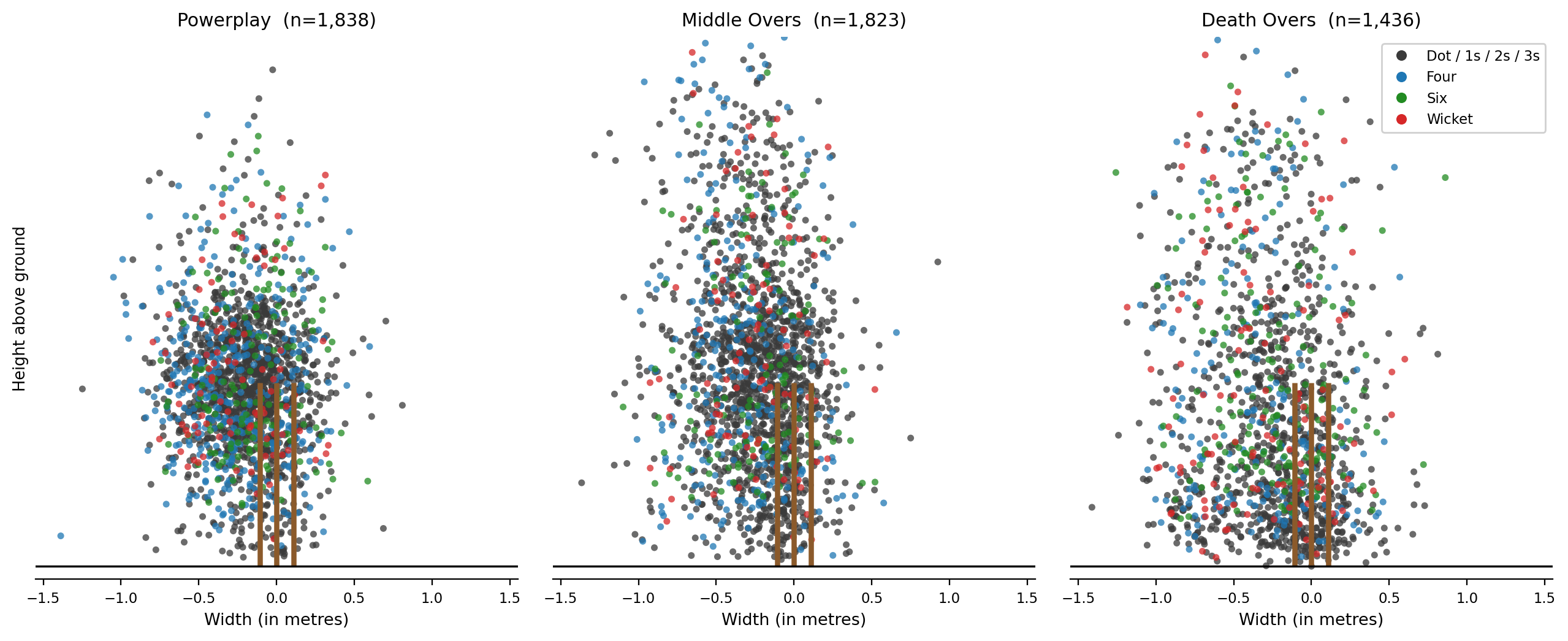}
\caption{Ball impact point (the `beehive' stump view) for right-hand batters against right-arm pace
in IPL 2026, by phase. Each dot is a delivery, placed by its width and
height as it reaches the batter; wickets, fours and sixes are highlighted. The cloud rises and
widens from the powerplay to the death overs.}
\label{fig:beehive}
\end{figure}

The case for line and length is cricketing as much as statistical. Two deliveries identical in every
recorded respect, i.e.\ the same phase, innings, bowler type, venue and match state, are still not
equivalent if one is a good length ball outside off and the other a half volley on the pads. A boundary off the former is a finer stroke than off the latter and, for a fair evaluation, should be
rewarded as such. Symmetrically, being dismissed by a sharp yorker is not the lapse that
missing a straight full toss is, and ideally the two should be debited differently. These `line-and-length'
effects are large, and they shift with the phase and the bowler. Figure~\ref{fig:pitchmap} shows, for
right-hand batters against right-arm pace, how sharply scoring depends on where the ball lands. Against a good-length ball outside off, which is the single most common delivery ($27\%$ of powerplay
balls), the strike rate is only $97$, less than half the $208$ that a short ball on the same line
concedes, and the mix itself moves through the innings, fuller lengths growing far more frequent at
the death overs. Figure~\ref{fig:beehive} tells the same story from the striker's end, plotting where each
delivery reaches the batter. The cloud of impact points\footnote{Impact points, along with the detailed ball-tracking data for IPL 2026, have been obtained from the official IPL website run by BCCI: \href{https://www.iplt20.com/matches/results}{https://www.iplt20.com/matches/results}.}, and the wickets, fours and sixes within it, climbs and spreads as the innings progresses. A model blind to line and length scores all of these deliveries against one bar; conditioning on them, where the data allow, is the refinement Section~\ref{sec:supp:mrae} takes up.

\subsection{Estimation by iterated backfitting}
\label{sec:rae:fit}

The factors in \eqref{eq:rae_model} are correlated: deliveries early (and late) in an innings
are simultaneously in the Powerplay (and Death) phase, at low (and high) wicket counts, and often against
pace, so fitting each against the raw base rate would over-count shared variation. I therefore estimate them by sequential residualisation (backfitting) in the fixed order: scenario $\to$ era $\to$ wicket $\to$ bowler type $\to$ venue $\to$ opposition. Carry a running fitted value $e_b$, initialised at
$e_b=\mu_0$. At any stage, for the current factor $f$ with cells $c$, set
\begin{equation}
\label{eq:rae_ratio}
\hat\gamma^{(f)}_c \;=\; \frac{\sum_{b\in c} r_b}{\sum_{b\in c} e_b},
\end{equation}
then update $e_b\leftarrow e_b\cdot\hat\gamma^{(f)}_{c_f(b)}$ before moving to the next factor. Note that \eqref{eq:rae_ratio} is exactly the multiplicative adjustment that equates each cell's fitted total to its observed total and it has a precise likelihood interpretation.

\begin{proposition}[Cell ratio as a conditional Poisson MLE]
\label{prop:poisson}
Suppose $r_b\mid\bm{x}_b\sim\mathrm{Poisson}(\mu_b)$ independently, with
$\log\mu_b=\log e_b+\log\gamma_{c(b)}$, i.e.\ $\log e_b$ enters as a known offset.
Then the maximum-likelihood estimator of the multiplier $\gamma_c$ for cell $c$ is
$\hat\gamma_c=\big(\sum_{b\in c} r_b\big)\big/\big(\sum_{b\in c} e_b\big)$, which is
\eqref{eq:rae_ratio}.
\end{proposition}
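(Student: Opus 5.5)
The plan is to write down the Poisson log-likelihood with the offsets $e_b$ treated as fixed, observe that it separates across cells, and solve the single-parameter score equation in each cell. Treating the $r_b$ as independent, the log-likelihood (up to the term $-\sum_b\log r_b!$, which does not involve the parameters) is
\begin{equation*}
\ell(\gamma)=\sum_{b}\bigl[r_b\log\mu_b-\mu_b\bigr]
=\sum_{c}\sum_{b\in c}\bigl[r_b(\log e_b+\log\gamma_c)-e_b\gamma_c\bigr].
\end{equation*}
Each delivery lies in exactly one cell $c(b)$ of the current factor, so $\ell$ splits into a sum of terms, one per cell. The $c$-th term depends only on $\gamma_c$. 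Hence the joint maximisation decouples into separate one-dimensional problems
\begin{equation*}
\ell_c(\gamma_c)=R_c\log\gamma_c-E_c\gamma_c+\text{const},\qquad R_c=\sum_{b\in c}r_b,\quad E_c=\sum_{b\in c}e_b .
\end{equation*}

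Next, differentiate: $\ell_c'(\gamma_c)=R_c/\gamma_c-E_c$. This vanishes uniquely at $\gamma_c=R_c/E_c$. Since $\ell_c''(\gamma_c)=-R_c/\gamma_c^2<0$ for $R_c>0$, the function is strictly concave on $(0,\infty)$, so this stationary point is the global maximiser. Equivalently, one can reparametrise by $\beta_c=\log\gamma_c$ and note that $R_c\beta_c-E_ce^{\beta_c}$ is concave in $\beta_c$; this matches the log-linear form \eqref{eq:rae_loglin}. The resulting estimator is exactly \eqref{eq:rae_ratio}. The score equation also restates the remark preceding the proposition: the fitted cell total $\hat\gamma_cE_c$ equals the observed total $R_c$.

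The only delicate step is the boundary case. The argument needs $E_c>0$, which holds because $\mu_0>0$ and every earlier multiplier is positive. If $R_c=0$, the function $\ell_c(\gamma_c)=-E_c\gamma_c$ is strictly decreasing on the closed half-line $[0,\infty)$. The supremum is then attained at $\hat\gamma_c=0=R_c/E_c$, so the formula still holds, provided the parameter space is taken to be the non-negative multipliers as stated after \eqref{eq:rae_model}. I would note this case explicitly.

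I would also remark on what ``conditional'' means here. The offsets $e_b$ are themselves functions of earlier estimates, and the statement treats them as known. The MLE is therefore conditional on the previous backfitting stages, not a joint MLE of all factors at once.
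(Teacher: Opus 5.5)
Your proof is correct and follows the same route as the paper. Both write the cell-$c$ Poisson log-likelihood with $\log e_b$ as a known offset, set the score $R_c/\gamma_c - E_c$ to zero to get $\hat\gamma_c = R_c/E_c$, and confirm the maximum by the negative second derivative. Your explicit decoupling across cells and your treatment of the $R_c=0$ case, where the paper's second-derivative check is not strict and no interior stationary point exists, are useful additions that the paper leaves implicit.
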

\begin{proof}
The cell-$c$ log-likelihood contribution is
$\sum_{b\in c}\{r_b(\log e_b+\log\gamma_c)-e_b\gamma_c\}+\text{const}$.
Differentiating with respect to $\gamma_c$ and equating it to zero gives $\sum_{b\in c} r_b/\gamma_c-\sum_{b\in c} e_b=0$, hence the stated estimator. The second derivative $-\sum r_b/\gamma_c^2<0$ confirms a maximum.
\end{proof}

Thus each update step given by \eqref{eq:rae_ratio} is one coordinate (block) step of Gauss-Seidel ascent on the Poisson log-likelihood of \eqref{eq:rae_loglin}, holding the other factors as offset. Since the log-likelihood is concave in the log-multipliers, repeating the sweep drives the estimates to that peak, i.e.\ the one best-fitting set of multipliers. On the first pass, each factor is fitted against the factors placed before it, and on every later pass, a factor is re-fitted against all the others (I divide its current multiplier back out of $e_b$ before re-estimating, so no factor counts itself twice). I iterate the sweep till convergence, stopping once the fitted values $e_b$ have effectively stopped changing. The empirical Bayes shrinkage of Section~\ref{sec:rae:shrink} is re-applied each time a factor is fitted.

\begin{remark}[Order independence at convergence]
\label{rem:orderindep}
The maximised log-likelihood is unique, so the fitted expectation and hence the ball-level $\RAE$ does not depend on the order in which the factors are cycled. The ordering (most to least populous) affects only the number of cycles to convergence. Up to the mild empirical-Bayes shrinkage, the fit satisfies the Poisson score identities $\sum_{b\in c} r_b=\sum_{b\in c}\hat\mu_b$ for every cell of every factor, so the residual $\RAE$ is centered within each context cell and not merely in aggregate.
\end{remark}

\subsection{Empirical Bayes shrinkage of the multipliers}
\label{sec:rae:shrink}

Thin cells with very few deliveries (a rare ground in a particular three-year window; a bowler or batter with a short record) yield unstable ratios. Each raw multiplier is therefore shrunk towards the neutral value $1$ by a pseudo-count $\kappa_f$:
\begin{equation}
\label{eq:rae_shrink}
\gamma^{(f)}_c \;=\; \frac{n_c\,\hat\gamma^{(f)}_c + \kappa_f}{n_c + \kappa_f}
\;=\; \omega_c\,\hat\gamma^{(f)}_c + (1-\omega_c)\cdot 1,
\qquad \omega_c=\frac{n_c}{n_c+\kappa_f},
\end{equation}
where $n_c$ is the cell's delivery count. The shrinkage is a convex combination of
the data ratio and unity, with data weight $\omega_c$ increasing in sample size;
$\kappa_f$ is the number of prior deliveries pulling a thin cell back to neutral.

Here $1$ is the neutral no-adjustment value: a multiplier of $1$ leaves the expectation unchanged. Shrinking towards it simply treats a thin, poorly-estimated cell as average until it has enough data to prove otherwise. I estimate one $\kappa_f$ per factor from the data by empirical Bayes, so the degree of shrinkage is itself learned. Model the raw ratios $\hat\gamma^{(f)}_c$ as noisy estimates of true multipliers $\theta_c$ that scatter across cells around a cross-sectional mean of $1$ with variance $\tau_f^2$. Two variances must be separated here. The between-cell variance $\tau_f^2$ is the real spread of the true effects from one cell to the next; this is the signal I want to keep. The within-cell variance is the sampling error in a single cell's estimate, coming only from its finite number of deliveries; this is the noise I want to discount. Because $\hat\gamma^{(f)}_c=\big(\sum_{b\in c}r_b\big)/\big(\sum_{b\in c}e_b\big)$ has a fixed offset in its denominator, this sampling variance is
\begin{equation}
\label{eq:rae_vc}
v_c=\frac{\Var(r) \ n_c}{\big(\sum_{b\in c}e_b\big)^2} .
\end{equation}
I calibrate it to the empirical run variance $\Var(r)$ rather than to a Poisson assumption: runs off the bat are lumpy (point masses concentrated mostly at $0, 1, 2, 4, 6$ and few at $3, 5$) and markedly over-dispersed, with $\Var(r)/\mu_0\approx 2.08$ in the dataset, so a Poisson variance would understate the noise and leave thin cells over-trusted.

I do not observe $\tau_f^2$ directly, since the observed spread of $\hat\gamma^{(f)}_c$ mixes signal and noise. I recover it by the DerSimonian-Laird method of moments \citep{dersimonian1986}, the standard method-of-moments empirical Bayes estimator, which simply subtracts the average sampling variance from the total observed spread, leaving the between-cell signal $\tau_f^2$. The Bayes posterior mean for $\theta_c$ is then the precision-weighted average of the cell's own ratio and the neutral value $1$, with weight
\begin{equation}
\label{eq:rae_weight}
w_c=\frac{\tau_f^2}{\tau_f^2+v_c}=\frac{\text{signal}}{\text{signal}+\text{noise}} .
\end{equation}
So a cell is trusted in proportion to how much of its apparent effect is real rather than sampling noise. Finally, because each delivery's expectation is close to the base rate, $\sum_{b\in c}e_b\approx n_c\mu_0$ and hence $v_c\approx(\Var(r)/\mu_0^2)/n_c$; substituting this into \eqref{eq:rae_weight} collapses the variance weight exactly onto the pseudo-count form $\omega_c=n_c/(n_c+\kappa_f)$ of \eqref{eq:rae_shrink}, with
\begin{equation}
\label{eq:rae_kappa}
\kappa_f=\frac{\Var(r)}{\mu_0^{2} \ \tau_f^{2}}
=\frac{\text{within-cell sampling noise}}{\text{between-cell signal}} .
\end{equation}
Thus a single number $\kappa_f$ (the noise-to-signal ratio for factor $f$), read as an equivalent count of prior deliveries, sets the shrinkage. A factor whose cells differ only weakly beyond noise (small $\tau_f^2$) gives a large $\kappa_f$ and is pulled hard towards $1$, while a factor carrying strong, repeatable variation (large $\tau_f^2$) gives a small $\kappa_f$ and is left close to its raw ratios. A cell holding exactly $\kappa_f$ deliveries sits halfway ($\omega_c=0.5$) between its own estimate and the neutral value. The fitted values (Table~\ref{tab:rae_factors}) recover the expected ordering without it being imposed: the small pace/spin effect (bowler-type) is shrunk most heavily ($\kappa_f \approx 10^3$), while venue, opposition batter and scenario, which carry large heterogeneity, are shrunk the least ($\kappa_f \approx 1.5$-$1.9\times10^2$).

\begin{remark}[Relation to the conjugate Gamma-Poisson posterior]
\label{rem:gamma}
Under a $\mathrm{Gamma}(\kappa,\kappa)$ prior on $\gamma_c$ (prior mean $1$, prior strength $\kappa$) with the Poisson likelihood of Proposition~\ref{prop:poisson}, the posterior mean is $(\sum_{b\in c} r_b+\kappa)/(\sum_{b\in c} e_b+\kappa)$. Estimator \eqref{eq:rae_shrink} replaces the exposure $\sum_{b\in c} e_b$ by the count $n_c$; since $e_b\approx\mu_0\approx 1.24$ runs per ball, $\sum_{b\in c} e_b \approx 1.24\,n_c$, so \eqref{eq:rae_shrink} is a close, slightly more conservative proxy for the exact conjugate posterior mean. I therefore describe it as an empirical Bayes shrinkage estimator rather than an exact Bayes rule.
\end{remark}

\subsection{The opposition overlay and aggregation}
\label{sec:rae:agg}

The single adjustment that specialises the cohort expectation to either role is the opposition factor $\gamma^{\mathrm{opp}}$, estimated symmetrically. For batters' evaluation, a bowler multiplier $\gamma^{\mathrm{bowl}}_{\text{(bowler id)}}$ is co-fit with the context factors in the joint backfitting of Section~\ref{sec:rae:fit} and for bowlers' evaluation, a batter multiplier $\gamma^{\mathrm{bat}}_{\text{(batter id)}}$ is co-fit with the same context factors. The two role-specific fits share an identical context specification, differing only in which opposition identity they carry, and each is applied one at a time.
\begin{itemize}\itemsep1pt
\item Batter's \emph{$\RAE$} multiplies the expectation by the opposition-bowler
      multiplier $\gamma^{\mathrm{bowl}}$, so out-scoring a strong bowler yields a
      larger residual.
\item Bowler's \emph{$\RAE$} multiplies the expectation by the opposition-batter
      multiplier $\gamma^{\mathrm{bat}}$, so containing a strong batter yields a
      larger (more negative) residual.
\end{itemize}
A player's own factor is never applied to his own residual since multiplying a bowler's expectation by his own multiplier would cancel the skill being measured, and likewise for a batter.

The ball-level Runs Above Expected and its player aggregates are finally recorded as
\begin{equation}
\label{eq:rae_ball}
\RAE_b=r_b-\hat\mu(\bm{x}_b),
\qquad
\overline{\RAE}_i=\frac{1}{n_i}\sum_{b:\,i}\RAE_b,
\qquad
\RAE^{\text{total}}_i=\sum_{b:\,i}\RAE_b,
\end{equation}
with standard error $\mathrm{se}_i=\sd_b(\RAE_b)/\sqrt{n_i}$, where the expectation $\hat\mu$ carries the role-appropriate opposition factor. For a batter, $\overline{\RAE}_i$ (denoting $\RAE$/Ball) is the headline rate, centered so that a cohort-average batter scores $\approx 0$, with positive values denoting run production above contextual expectation. For a bowler, $\overline{\RAE}_i$ is the excess runs conceded per ball, so lower (negative) is better: a negative $\RAE$/Ball means the bowler concedes fewer runs than an average bowler would concede in the same situation. The per-ball value used by all downstream signals is $\val_b=\RAE_b$ for batters and $\val_b=-\RAE_b$ for bowlers.

\paragraph{Reading $\RAE$ at different scales.} Because $\RAE$ is defined and additive at the ball level, the same primitive reads coherently at every level of aggregation, with no re-fitting. The expectation is fixed, so a delivery, a knock and a career are all just sums of the identical residual, and each scale answers a distinct question.
\begin{itemize}\itemsep1pt
\item \emph{Single ball-level event} ($\RAE_b$): the value of a single delivery, i.e.\ the unit for
      ball-by-ball diagnostics (determines which deliveries a batter punished, where a bowler was
      beaten etc.).
\item \emph{Per innings or spell} ($\sum_{b\in\text{knock/spell}}\RAE_b$): the value of
      one performance. This is a context-adjusted innings/spell
      measure. For example, a fighting $40$-run innings against quality pace during a collapse can outscore a
      $60$-run innings on a flat deck, because the expectation bar it cleared was higher.
\item \emph{Per-ball average} ($\overline{\RAE}_i$ or $\RAE$/Ball): a player's quality rate,
      the headline per-ball figure, comparable across players irrespective of sample size.
\item \emph{Cumulative} ($\RAE^{\text{total}}_i$):
      the total value contributed by a player over a series, tournament, career or any span of time.
\end{itemize}
The rate and the total $\RAE$ outputs distinctly answer ``how good per ball (or on average)'' versus ``how much value in total'' a player produces. A short-sample specialist can top the former while a durable accumulator tops the latter (Tables~\ref{tab:ipl_bat} and~\ref{tab:iplbowl}). Run prevention, however, is only one of a bowler's two sources of value, and only one side of a dismissal's ledger; the next section prices the second, Dismissal Adjusted Runs.

\section{Dismissal Adjusted Runs (\texorpdfstring{$\DAR$}{DAR})}
\label{sec:dar}

$\RAE$ prices the runs on a delivery, but a delivery that takes a wicket carries value that a per-ball run residual cannot capture: a wicket removes a batter from the remainder of the innings, and the runs he would have gone on to score go with him. From the batting side's view, the side loses whatever those forgone runs exceed what the batters who followed him actually managed; from the bowling side's, the same runs are saved. I call this quantity \emph{Dismissal Adjusted Runs}: the number of runs a dismissal costs the batting side, and equivalently the number it saves the bowling side. It is the framework's second primitive and, like $\RAE$, a currency shared by the two roles. I build it in two steps. Section~\ref{sec:dar:value} constructs a batting-side value function and reads off, for every delivery, the wicket cost $\wc(s)$ it would forgo were a dismissal to occur there, a single non-negative number attached to the match state. Section~\ref{sec:dar:trans} then aggregates this cost twice: once over a player's realised dismissals and once, as a hazard-weighted expectation, over all the balls the batter is party to. The primitive that enters Impact is the difference of the two, which centers the wicket ledger at zero in the same way $\RAE$ centers the run ledger.

A wicket's value is inherently counterfactual, in the sense of ``how would the innings have continued had the batter survived?''. It cannot be read directly off the scorecard, or off any record of events that actually took place, and must instead be estimated against a model of the innings' continuation. The natural model for this estimation is a run-expectancy function or `value' function. To every state of an innings it assigns the runs the batting side can expect to add from that point onward. Given such a function, a wicket needs no separate apparatus to price, rather it is simply the drop in expected remaining runs when a set batter is replaced by the next one in. Valuing game states this way has a long lineage in sport, from the run-expectancy tables of baseball \citep{lindsey1963} to the resource tables of \citet{duckworth1998}, which are themselves a value function of the overs and wickets (together termed as `resources' in DL and DLS methods) a side has in hand.

Such value functions are naturally recursive. The runs expected from a state equal what happens on the next delivery plus the expected value of the state that delivery leads to, averaged over its possible outcomes. This decomposition is a Bellman equation, the identity that organises dynamic programming by tying the value of a state to the values of its successors, and it comes in two forms that must not be confused. The Bellman optimality equation maximises over the actions available at each state and so describes how a side should bat to score the most. It is the object of the dynamic programming studies of optimal scoring rates and batting strategy in limited-overs cricket \citep{clarke1988, preston2000}. The Bellman expectation equation instead holds the policy fixed: it takes the way batters actually score and are dismissed, estimated directly from the data, and returns the expected runs under that behaviour, in the descriptive spirit of models of how scoring evolves through an innings \citep{stevenson2021}. Pricing a wicket calls for the runs an innings would in fact have yielded, not the runs an optimal side could have extracted, so I adopt the expectation form in this paper. Carrying no maximisation, it is a linear recursion, and since the state advances one ball at a time, it is acyclic in terms of balls remaining. So a single backward sweep from the end of the innings solves it exactly, with no optimiser and no fixed point to iterate towards.

\subsection{The batting team value function}
\label{sec:dar:value}
Describe an innings by the state $s=(m,w,i,j)$: $m$ legal balls remain in the innings, the batting side has $w$ remaining wickets in hand, the striker is of quality tier $i$ and the non-striker of tier $j$ (tiers are defined in Section~\ref{sec:dar:trans}), with $e\in\{1,2\}$ denoting the innings. Let
\[
\Vfun_e(m,w,i,j)=\text{expected runs the batting side scores over the remaining $m$ balls from state $s$},
\]
under observed play. The phase of a delivery is a deterministic function of the balls already bowled, given by $\varphi=\varphi(M_0-m)$ with $M_0=120$, so contextual scoring enters through the transition model without enlarging the state.

On the next delivery, the striker (tier $i$) is dismissed with hazard rate $h_{i,\varphi,e}$, otherwise he scores $r\in\mathcal R=\{0,1,2,3,4,5,6\}$ runs with probability $\pi_{r\mid i,\varphi,e}$, where $\sum_{r\in\mathcal R}\pi_{r\mid i,\varphi,e}=1$ (probabilities are conditional on survival of the batter). Odd runs and the end of an over rotate the strike; a dismissal brings in a fresh batter of tier $\iota(w)$ (see details in Section~\ref{sec:dar:trans}). Writing $\sigma(r,m)\in\{0,1\}$ for whether the strike changes ends after the ball (i.e.\ $r$ is odd or the over ends), the value function satisfies the Bellman expectation recursion
\begin{equation}
\label{eq:dar_bellman}
\Vfun_e(m,w,i,j)=
h_{i,\varphi,e}\,\Vfun_e\!\big(m-1,\,w-1,\,\iota(w),\,j\big)
\;+\;(1-h_{i,\varphi,e})\!\!\sum_{r\in\mathcal R}\!\pi_{r\mid i,\varphi,e}\,
\Big[\,r+\Vfun_e\!\big(m-1,\,w,\,i',j'\big)\Big],
\end{equation}
where $(i',j')=(j,i)$ when $\sigma(r,m)=1$ and $(i,j)$ otherwise, and on a dismissal the incoming batter takes strike (its position rotates under $\sigma$ likewise). The recursion closes at the absorbing boundaries
\begin{equation}
\label{eq:dar_bc}
\Vfun_e(0,w,i,j)=0 \quad(\text{balls exhausted}),\qquad
\Vfun_e(m,0,i,j)=0 \quad(\text{batting team gets all out}).
\end{equation}

Equation~\eqref{eq:dar_bellman} is the expectation form introduced above: it evaluates the runs a team scores under its observed scoring and dismissal behaviour, with no maximisation over batting actions. The state space is finite and acyclic when ordered by $m$, so \eqref{eq:dar_bellman} has a unique solution, obtained by a single backward sweep from $m=0$ to $m=M_0$.

\subsection{The transition model}
\label{sec:dar:trans}
The recursion needs, per (tier, phase, innings), a dismissal hazard rate $h$ and a scoring distribution $\pi$, all estimated from the same dataset with the empirical Bayes shrinkage of Section~\ref{sec:rae:shrink}.

\emph{Player rates and quality tiers.} For each batter, I form empirical Bayes shrunk estimates of their scoring rate $\mu_i$ and per-ball dismissal hazard $\eta_i$,
\begin{equation}
\label{eq:dar_rates}
\mu_i=\frac{\sum_{b:\,i}r_b+\kappa_\mu\,\mu_0}{n_i+\kappa_\mu},\qquad
\eta_i=\frac{d_i+\kappa_\eta\,\eta_0}{n_i+\kappa_\eta},
\end{equation}
with $d_i$ denoting the batter's dismissals, $\mu_0$ and $\eta_0$ the corresponding means based on the dataset, and $\kappa_\mu$ and $\kappa_\eta$ the corresponding pseudo-counts in the sense of the empirical Bayes shrinkage theory described in Section~\ref{sec:rae:shrink}. Batters are grouped into $Q=6$ quality tiers by ball-weighted quantiles of $\mu_i$, so a tier is a homogeneous quality band and thin players are pulled to the population before being tiered.

\emph{Scoring and hazard by state.} Pooling deliveries by (tier, phase, innings), the scoring distribution $\pi_{\cdot\mid i,\varphi,e}$ and hazard rate $h_{i,\varphi,e}$ are the corresponding empirical frequencies, each shrunk toward its (phase, innings) pooled distribution by a conjugate (Dirichlet, respectively Beta) pseudo-count so that sparse cells borrow strength. Estimating $\pi$ and $h$ separately by innings lets the second-innings value function reflect chase-state scoring; a finer conditioning on the required run rate is a natural refinement (see Section~\ref{sec:limits}).

\emph{Strike rotation and the incoming batter.} Odd runs and end of an over rotate the strike through $\sigma$. When a wicket falls, the incoming batter's tier $\iota(w)$ is the empirical average tier of batters entering at that position, so the value function unrolls the actual remaining batting order: every subsequent batter is accounted for through the recursion, weighted by how much of the innings remains and how likely each of them is to bat in it.

Because the state is ordered by balls remaining in the innings, one backward pass fills the entire table exactly (Algorithm~\ref{alg:dp}).

\begin{algorithm}[!ht]
\caption{Backward solve of the batting-side value function $\Vfun_e$ (Bellman expectation)}
\label{alg:dp}
\begin{algorithmic}[1]
\State $\Vfun_e(0,\cdot,\cdot,\cdot)\gets 0$;\quad $\Vfun_e(\cdot,0,\cdot,\cdot)\gets 0$
\For{$m=1,\dots,M_0$}
  \State $\varphi \gets \varphi(M_0-m)$;\quad $\text{overEnd}\gets\mathbbm{1}\!\left[(M_0-m+1)\bmod 6=0\right]$
  \For{$w=1,\dots,10$ \textbf{and} tiers $i,j\in\{1,\dots,Q\}$}
    \State $A \gets \sum_{r\in\mathcal R}\pi_{r\mid i,\varphi,e}\big[\,r+\Vfun_e(m-1,\,w,\,i',j')\big]$ \Comment{$(i',j')$ per strike rotation $\sigma(r,m)$}
    \State $B \gets \Vfun_e(m-1,\,w-1,\,\iota(w),\,j)$ \Comment{$0$ if $w=1$ (all out)}
    \State $\Vfun_e(m,w,i,j) \gets h_{i,\varphi,e}\,B + (1-h_{i,\varphi,e})\,A$
  \EndFor
\EndFor
\State \Return $\Vfun_e$
\end{algorithmic}
\end{algorithm}

\paragraph{Per-ball wicket cost.} Consider a delivery in state $s=(m,w,i,j,e)$: a tier-$i$
striker with a tier-$j$ partner, $m$ balls and $w$ wickets remaining in innings $e$. Had the
striker been dismissed here, the batting side would trade its current continuation for the one that
follows a wicket, the ball is consumed and a fresh batter of tier $\iota(w)$ resumes, so the
runs forgone are the drop in the value function,
\begin{equation}
\label{eq:dar_def}
\wc(s)=\Big[\underbrace{\Vfun_e(m,w,i,j)}_{\text{runs if batter survived}}\;-\;\underbrace{\Vfun_e(m-1,w-1,\iota(w),j)}_{\text{runs after the dismissal}}\;+\;\delta_{\text{fresh}}\Big]_{+},
\end{equation}
where $[\,\cdot\,]_+=\max\{\cdot,0\}$. The quantity in brackets is non-negative by construction, since a
set batter is worth at least as much as the fresh one who replaces him, and the positive part
only guards against rare estimation noise. The correction $\delta_{\text{fresh}}\ge0$ accounts
for the incoming batter entering unset: a new batter scores below his settled rate over his
first few deliveries. Writing the empirical `set curve' as $f(\cdot)$, with $f(k)\in(0,1]$ the
fraction of the settled rate realised over the first $k$ balls, and $E=\{1-(1-\eta_i)^{s_i
m}\}/\eta_i$ for the geometric survival expectation of balls the dismissed batter would have
faced (from his hazard $\eta_i$ and strike share $s_i$),
\begin{equation}
\label{eq:delta_fresh}
\delta_{\text{fresh}}=(1-f(E))\,\mu\,\min(E,E_{\max}).
\end{equation}
Note that $\wc(s)$ is a function of the match state, defined on every delivery and not
only on those that happen to take a wicket. Every state is priced on its own terms: the
quality of the striker and his partner, the balls and wickets remaining, the phase and the
innings, so no global `runs per wicket' constant enters anywhere. Dismissing a top-order
batter in the powerplay forgoes many runs; removing a tail-ender in the final over forgoes
almost none, and the same expression returns both.

\paragraph{Realised and expected Dismissal Adjusted Runs.} Let $W_b\in\{0,1\}$ mark whether the striker
was dismissed on delivery $b$, and let $s_b$ be its state. The realised Dismissal Adjusted Runs ($\realDAR$) sums the
cost over a player's actual dismissals, grouped by $\mathcal D_i$, the deliveries on which batter $i$ was
out and $\mathcal W_i$, the deliveries on which bowler $i$ took the wicket,
\begin{equation}
\label{eq:dar_agg}
\realDAR^{\text{bat}}_i=\!\!\sum_{b\in\mathcal D_i}\!\!\wc(s_b),
\qquad
\realDAR^{\text{bowl}}_i=\!\!\sum_{b\in\mathcal W_i}\!\!\wc(s_b) .
\end{equation}
Taken alone, this is a strictly non-negative accumulation, one term per wicket, and it is the
naive quantity one would debit the batter and credit the bowler. It is also what makes the two
roles incomparable if used directly: a mean-zero run ledger ($\RAE$) combined with a one-signed
wicket ledger ($\realDAR$) drives every batter's total down and every bowler's up, an artefact of
mixing a centered and an uncentered quantity rather than a fact about the players
(Section~\ref{sec:val:dar} quantifies the resulting scale gap). The correct comparison is not
against zero but against what the same deliveries would have cost an average player. Let
$\bar h_{\varphi,e}$ be the global dismissal rate per legal ball in phase $\varphi$, innings
$e$ (for evaluations confined to a large enough league, tournament or era, the corresponding dismissal rate can be used instead for finer accuracy). This is defined as a context-level baseline, conditioned on the situation and not on the player being
evaluated, exactly as the cohort expectation $\mu(\bm x_b)$ is in $\RAE$. Summing the cost over
every delivery a player takes part in, weighted by that hazard, gives the expected Dismissal Adjusted
Runs ($\xDAR$). For $\mathcal F_i$ denoting the deliveries batter $i$ faced and $\mathcal B_i$ those bowler $i$ bowled,
\begin{equation}
\label{eq:dar_exp}
\xDAR^{\text{bat}}_i=\!\!\sum_{b\in\mathcal F_i}\!\!\bar h_{\varphi_b,e_b}\,\wc(s_b),
\qquad
\xDAR^{\text{bowl}}_i=\!\!\sum_{b\in\mathcal B_i}\!\!\bar h_{\varphi_b,e_b}\,\wc(s_b) .
\end{equation}
$\xDAR^{\text{bat}}_i$ reads as the wicket value an average batter would be expected to concede
over the situations $i$ actually batted in, and $\xDAR^{\text{bowl}}_i$ the value an average bowler
would be expected to take over the balls $i$ actually bowled.

\paragraph{The centered Dismissal Adjusted Runs.} The primitive that enters the final Impact of a player,
\emph{Dismissal Adjusted Runs} ($\DAR$), is the difference of the two aggregates, realised minus expected.
Because $\sum_{b\in\mathcal F_i}W_b\,\wc(s_b)=\sum_{b\in\mathcal D_i}\wc(s_b)=\realDAR^{\text{bat}}_i$,
it collapses to a single sum over the balls faced,
\begin{equation}
\label{eq:dar_cent}
\DAR^{\text{bat}}_i\;:=\;\realDAR^{\text{bat}}_i-\xDAR^{\text{bat}}_i\;=\;\sum_{b\in\mathcal F_i}\big(W_b-\bar h_{\varphi_b,e_b}\big)\,\wc(s_b),
\end{equation}
and symmetrically for the bowler over $\mathcal B_i$. The factor $\sum_b(W_b-\bar
h_{\varphi_b,e_b})$ is the martingale residual of the dismissal counting process, i.e.\ observed
events minus their compensator, the accumulated hazard (following \cite{barlow1988} and \cite{therneau1990}), so
\eqref{eq:dar_cent} is that residual weighted by the run value of each potential wicket. The
following makes the centering precise.

\begin{proposition}[The centered Dismissal Adjusted Runs is a mean-zero martingale residual]
\label{prop:centre}
If, conditional on the state $s_b$, a batter is dismissed at the league hazard,
$\E[W_b\mid s_b]=\bar h_{\varphi_b,e_b}$, then $\E\big[\DAR^{\text{bat}}_i\big]=\E\big[\realDAR^{\text{bat}}_i-\xDAR^{\text{bat}}_i\big]=0$.
Consequently a batter is charged only for departing from the global (or league) dismissal rate: negative
values (survival beyond expectation) raise Bat Impact, positive values (being dismissed more
often, or in costlier states, than his situations warranted) lower it, and dismissal at exactly
the expected rate is free.
\end{proposition}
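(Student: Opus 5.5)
The plan is to use the single-sum form \eqref{eq:dar_cent} and apply the tower property ball by ball. Each summand is $(W_b-\bar h_{\varphi_b,e_b})\,\wc(s_b)$. Two facts matter here. First, $\wc(s_b)$ in \eqref{eq:dar_def} is a deterministic function of the state $s_b=(m,w,i,j,e)$: it is built from the value function $\Vfun_e$ solved once by Algorithm~\ref{alg:dp} and from the correction $\delta_{\text{fresh}}$, which depends only on tier rates and $m$. Second, the baseline hazard $\bar h_{\varphi_b,e_b}$ is also a function of $s_b$, because the phase $\varphi_b=\varphi(M_0-m)$ is determined by $m$. So both $\wc(s_b)$ and $\bar h_{\varphi_b,e_b}$ are $\sigma(s_b)$-measurable. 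Conditioning on $s_b$ then gives
\[
\E\big[(W_b-\bar h_{\varphi_b,e_b})\,\wc(s_b)\,\big|\,s_b\big]
=\wc(s_b)\big(\E[W_b\mid s_b]-\bar h_{\varphi_b,e_b}\big)=0
\]
under the stated hypothesis. Taking outer expectations makes each term mean zero. Since $\wc$ is bounded (it is at most $\Vfun_e\le 6M_0$ plus a bounded $\delta_{\text{fresh}}$), every expectation exists.

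The delicate step is summing over $b\in\mathcal F_i$, because the index set is random: whether the batter faces a later ball depends on his surviving earlier ones. I will handle this by writing $\DAR^{\text{bat}}_i=\sum_{b}\mathbbm{1}[b\in\mathcal F_i]\,(W_b-\bar h_{\varphi_b,e_b})\,\wc(s_b)$ over all deliveries $b$ in chronological order. I will take a filtration $\mathcal G_{b}$ generated by the history up to just before ball $b$, and read the hypothesis as $\E[W_b\mid\mathcal G_b]=\bar h_{\varphi_b,e_b}$, i.e.\ the state $s_b$ is a sufficient summary of the history. The indicator $\mathbbm{1}[b\in\mathcal F_i]$, the state $s_b$, and hence $\wc(s_b)$ are all $\mathcal G_b$-measurable, so they are predictable. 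Each term is then a martingale difference. The partial sums $\sum_{b\le t}$ form a martingale, namely the run-weighted compensated counting process $\int \wc\,d(N-\Lambda)$ in the sense of \cite{barlow1988}. Because the number of balls is bounded by $M_0$ per innings and finitely many innings are involved, the sum is finite and bounded, and the expectation of the total is zero. This also gives the identity $\E[\realDAR^{\text{bat}}_i]=\E[\xDAR^{\text{bat}}_i]$, via the collapse $\sum_b W_b\wc(s_b)=\realDAR^{\text{bat}}_i$ already noted before the proposition.

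The interpretive consequence then follows from \eqref{eq:dar_cent} and the fact that $\wc\ge0$. A batter with $\E[W_b\mid s_b]<\bar h$ on balls of substantial $\wc$ has negative expected $\DAR$, which raises his Bat Impact, while excess dismissals, or dismissals concentrated in high-$\wc$ states, make it positive. The bowler version is identical over $\mathcal B_i$ with the wicket indicator for that bowler.

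The main obstacle is the randomness of $\mathcal F_i$. A naive per-ball argument that conditions only on $s_b$ would not justify exchanging the expectation with a sum over a stopping-dependent set. I will therefore state explicitly the strengthening of the hypothesis to conditioning on the past $\mathcal G_b$, or argue that the paper's Markov state makes conditioning on $s_b$ equivalent. Optional stopping is then trivial because the horizon is bounded.
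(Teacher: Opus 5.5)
Your proposal is correct and follows the paper's proof: condition each summand $(W_b-\bar h_{\varphi_b,e_b})\,\wc(s_b)$ of the single-sum form on $s_b$, pull out $\wc(s_b)$, apply the hazard hypothesis, and sum using the tower property. Your predictable-indicator and filtration treatment of the random index set $\mathcal F_i$, which requires the hypothesis to hold given the past (or given $s_b$ together with the event that batter $i$ is on strike), makes explicit a step that the paper's one-line ``summing over $\mathcal F_i$'' passes over.
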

\begin{proof}
Taking expectations in \eqref{eq:dar_cent}, $\E[(W_b-\bar h_{\varphi_b,e_b})\wc(s_b)\mid
s_b]=\wc(s_b)\,\E[W_b-\bar h_{\varphi_b,e_b}\mid s_b]=0$ under the stated null; summing over
$\mathcal F_i$ and applying the tower property gives the result. The same argument holds for the
bowler over $\mathcal B_i$.
\end{proof}

\begin{remark}[A transfer of expected runs, conserved system-wide]
\label{rem:transfer}
The wicket cost $\wc(s_b)$ enters the dismissed batter's ledger with a negative sign and the
bowler's with a positive one, so a wicket remains a transfer of expected runs between the sides.
The centering is conserved in aggregate as well: the expected total is the same whether summed over all batters or all bowlers, $\sum_i\xDAR^{\text{bat}}_i=\sum_b\bar h_{\varphi_b,e_b}\,\wc(s_b)=\sum_i\xDAR^{\text{bowl}}_i$, while $\sum_i\realDAR^{\text{bat}}_i=\sum_i\realDAR^{\text{bowl}}_i$ is the realised total. The framework therefore
stays balanced globally while judging each player against the situations they personally faced,
the same property $\RAE$ possesses.
\end{remark}

\section{Impact: batting and bowling on the same runs scale}
\label{sec:composite}

With two primitives in hand, i.e.\ $\RAE$ for excess runs on every ball and the centered Dismissal Adjusted
Runs ($\DAR=\realDAR-\xDAR$) for excess wicket value, a player's Impact is his total contribution across
both channels. The construction is symmetric between the roles and differs only in sign: a
batter and a bowler each add value by scoring or saving runs, and transfer value at dismissals
relative to what those dismissals were expected to be worth.

\paragraph{Batting Impact.} A batter adds runs above expectation ball by ball ($+\RAE$) and
carries the centered wicket ledger:
\begin{equation}
\label{eq:bat_impact}
\OmegaBat_i=\sum_{b\in\mathcal F_i}\RAE_b\;-\;\DAR^{\text{bat}}_i .
\end{equation}
The first term rewards how he scored, a per-ball residual against a contextual par, so a
below-par passage accrues little or negative $\RAE$ however long it lasts. The second term,
expanded as $\realDAR^{\text{bat}}_i-\xDAR^{\text{bat}}_i$, credits the wicket value an average batter
would have surrendered over his situations and debits what he actually surrendered. A batter who
outlasts the global (or league) dismissal rate is rewarded for it; one who is dismissed early or cheaply,
relative to the states he batted in, is charged the excess. This differs from a raw wicket
penalty in an important way. Getting out is the expected end of every innings, so the metric does not tax a batter simply for being dismissed, rather it taxes only dismissals beyond expectation,
and pays a survival dividend below it. Occupation of the crease is close to neutral rather than
free, since the survival credit it earns is offset by the weak $\RAE$ that slow scoring
produces, while sustained batting at or above par is rewarded through both channels at once.

\paragraph{Bowling Impact.} Symmetrically, a bowler saves runs ball by ball ($-\RAE$) and
carries the centered wicket credit:
\begin{equation}
\label{eq:bowl_impact}
\OmegaBowl_i=-\sum_{b\in\mathcal B_i}\RAE_b\;+\;\DAR^{\text{bowl}}_i .
\end{equation}
Here the wicket term credits a bowler for taking wickets beyond what his overs would ordinarily
yield and neither rewards nor penalises one who takes exactly the expected number. This is the
correction that a raw wicket count misses: a high-volume containment bowler accumulates wickets
simply by bowling many deliveries, and pricing those against the expected haul strips out the
inflation while leaving the strike bowlers who beat expectation ahead. Run prevention and this
net wicket credit are the two separate sources of bowling value, and both are measured in runs,
so they add directly. Their relative weight is not a tuning parameter but a consequence of the value function and the hazard baseline.

\begin{remark}[One centered scale, comparable across roles]
\label{rem:onescale}
Both \eqref{eq:bat_impact} and \eqref{eq:bowl_impact} are sums of two mean-zero residuals in
runs above expectation. A batter's Impact and a bowler's Impact are therefore readily comparable: a value of $+300$ denotes the same $300$ runs of net contribution
whichever role produced it, and an average performer of either role sits near the common origin. Under the naive aggregate
$\sum\RAE\mp\realDAR$, a one-signed wicket total added to a centered run total pushes the roles onto
disjoint ranges; subtracting the expectation $\xDAR$ restores a shared scale, as
Section~\ref{sec:val:dar} documents. Impact may be read as a career or tournament total, which is
longevity-aware because both the run value and the survival credit accumulate with volume, or as
a per-ball \emph{rate} $\OmegaBat_i/n_i$; the two answer peak-quality and total-contribution
questions respectively.
\end{remark}

Given the value table of Algorithm~\ref{alg:dp} and the hazard baseline $\bar h_{\varphi,e}$,
both aggregates assemble in a single linear pass over the deliveries
(Algorithm~\ref{alg:impact}): each ball adds its $\RAE$ to the striker and its negative to the
bowler, adds the hazard-weighted wicket cost $\bar h_{\varphi_b,e_b}\wc(s_b)$ to both players'
expected ledgers, and, on a dismissal, adds the realised cost $\wc(s_b)$ as well.

\begin{algorithm}[!ht]
\caption{Assembly of batting and bowling Impact in one pass over deliveries}
\label{alg:impact}
\begin{algorithmic}[1]
\State $R^{\text{bat}},R^{\text{bowl}},\realDAR^{\text{bat}},\realDAR^{\text{bowl}},\xDAR^{\text{bat}},\xDAR^{\text{bowl}}\gets 0$ for all players
\For{each legal delivery $b$ with striker $p$, bowler $q$, state $s_b$}
  \State $R^{\text{bat}}_p \mathrel{+}= \RAE_b$;\quad $R^{\text{bowl}}_q \mathrel{-}= \RAE_b$ 
  \State $\xDAR^{\text{bat}}_p \mathrel{+}= \bar h_{\varphi_b,e_b}\,\wc(s_b)$;\quad $\xDAR^{\text{bowl}}_q \mathrel{+}= \bar h_{\varphi_b,e_b}\,\wc(s_b)$ 
  \State \textbf{if} a wicket falls ($W_b=1$): $\realDAR^{\text{bat}}_{d(b)} \mathrel{+}= \wc(s_b)$,\ \ $\realDAR^{\text{bowl}}_q \mathrel{+}= \wc(s_b)$ 
\EndFor
\State $\OmegaBat_p \gets R^{\text{bat}}_p-(\realDAR^{\text{bat}}_p-\xDAR^{\text{bat}}_p)$;\quad $\OmegaBowl_q \gets R^{\text{bowl}}_q+(\realDAR^{\text{bowl}}_q-\xDAR^{\text{bowl}}_q)$
\State \Return $\{\OmegaBat_p\},\{\OmegaBowl_q\}$
\end{algorithmic}
\end{algorithm}

\section{Results and discussions}
\label{sec:validation}

This section documents the fitted framework on the entire dataset: the structure of the expected runs model (Section~\ref{sec:val:rae}), the distribution of $\RAE$ (Section~\ref{sec:val:dist}), the state dependence of Dismissal Adjusted Runs (Section~\ref{sec:val:dar}), the correlation structure and convergent validity (Section~\ref{sec:val:corr}), and the face-validity leaderboards from an IPL case study
(Section~\ref{sec:val:lead}). All quantities are computed by the production estimators of
Sections~\ref{sec:rae}-\ref{sec:composite}.

\subsection{Structure of the expected runs model}
\label{sec:val:rae}

The grand mean is $\mu_0=1.2425$ runs per legal delivery. Table~\ref{tab:rae_factors} summarises the seven fitted factors and their shrinkage pseudo-counts; Table~\ref{tab:rae_cells} reports the interpretable scenario, wicket and bowler-type multipliers, and Figure~\ref{fig:rae_factors} plots the era trend.

\begin{finding}[Death overs premium, Powerplay discount]
\label{find:scenario}
Conditional on all other factors, expected runs in the Death overs run $+18.8\%$
(innings~1) and $+13.2\%$ (innings~2) above base, while the Powerplay sits
$0$-$4\%$ below base. The strike rate premium of the Powerplay visible in
raw data is therefore largely a phase artefact: because the scenario factor raises
the Death overs bar, a run scored (or saved) there earns proportionately less (or
more) $\RAE$ than the same run in the lower-scoring Powerplay.
\end{finding}

\begin{table}[!ht]
\centering
\caption{The fitted $\RAE$ factor model. Multiplier ranges for the low-cardinality factors are over interpretable cell aggregates; for the high-cardinality venue and opposition factors, they are raw shrunk-cell extrema. $\tau_f^2$ is the empirical Bayes between-cell variance and $\kappa_f=\Var(r)/(\mu_0^2\tau_f^2)$ the resulting data estimated shrinkage pseudo-count \eqref{eq:rae_kappa}. The batter and bowler factors are the symmetric opposition multipliers of Section~\ref{sec:rae:agg}, applied one at a time.}
\label{tab:rae_factors}
\small
\begin{tabular}{lrcrr}
\toprule
Factor (cell key) & No. of cells & Multiplier range & $\tau_f^2$ & $\kappa_f$ \\
\midrule
Scenario\ (phase $\times$ innings)         & 8       & $0.96$ - $1.27$ & $0.0093$ & 180 \\
Era\ (year $\times$ phase)                 & 69      & $0.96$ - $1.14$ & $0.0016$ & 1{,}043 \\
Wicket\ (phase $\times$ wickets fallen)    & 33      & $0.66$ - $1.16$ & $0.0034$ & 485 \\
Bowler type\ (phase $\times$ sub-type)     & 21      & $0.97$ - $1.03$ & $0.0013$ & 1{,}249 \\
Venue\ (ground $\times$ 3-yr bucket)       & 1{,}268 & $0.17$ - $1.36$ & $0.0054$ & 311 \\
Opposition bowler\ (identity)              & 7{,}282 & $0.77$ - $1.18$ & $0.0061$ & 273 \\
Opposition batter\ (identity)              & 9{,}740 & $0.50$ - $1.87$ & $0.0249$ & 67 \\
\midrule
\multicolumn{5}{l}{\footnotesize Base rate $\mu_0 = 1.2425$ runs / legal ball; $\Var(r)=2.578$.}\\
\bottomrule
\end{tabular}
\end{table}

\begin{table}[!ht]
\centering
\caption{Interpretable context multipliers. Values above $1$ raise, and below $1$
lower, the cohort expected runs relative to base.}
\label{tab:rae_cells}
\small
\begin{tabular}{lcc@{\hskip 2.4em}lc}
\toprule
\multicolumn{3}{c}{Scenario $\gamma^{\mathrm{scen}}$} &
\multicolumn{2}{c}{Bowler type $\gamma^{\mathrm{type}}$} \\
\cmidrule(r){1-3}\cmidrule(r){4-5}
Phase & Inns 1 & Inns 2 & Sub-type & Multiplier\\
\midrule
Powerplay    & 0.959 & 0.999 & Right-arm pace & 1.028 \\
Middle overs & 0.968 & 0.975 & Left-arm pace  & 0.998 \\
Death overs  & 1.188 & 1.132 & Leg-spin       & 1.005 \\
             &       &       & Off-spin       & 0.989 \\
             &       &       & Left-arm spin  & 0.968 \\
\bottomrule
\end{tabular}
\end{table}

\begin{table}[!ht]
\centering
\caption{Wicket multiplier $\gamma^{\mathrm{wkt}}$ by phase and number of wickets fallen.
Values above $1$ raise, and below $1$ lower, the cohort expected runs relative to base.}
\label{tab:rae_wkt}
\small
\setlength{\tabcolsep}{4.5pt}
\begin{tabular}{lccccccccccc}
\toprule
Wickets fallen & 0 & 1 & 2 & 3 & 4 & 5 & 6 & 7 & 8 & 9 & 10 \\
\midrule
Powerplay    & 1.047 & 1.025 & 0.920 & 0.825 & 0.746 & 0.716 & 0.851 & 0.911 & 1.008 & 0.984 & -- \\
Middle overs & 1.136 & 1.081 & 1.036 & 0.988 & 0.938 & 0.889 & 0.840 & 0.770 & 0.735 & 0.657 & 0.977 \\
Death overs  & 0.986 & 1.164 & 1.150 & 1.117 & 1.083 & 1.028 & 0.975 & 0.895 & 0.824 & 0.692 & 0.751 \\
\bottomrule
\end{tabular}
\end{table}

\begin{figure}[!ht]
\centering
\includegraphics[width=0.95\linewidth]{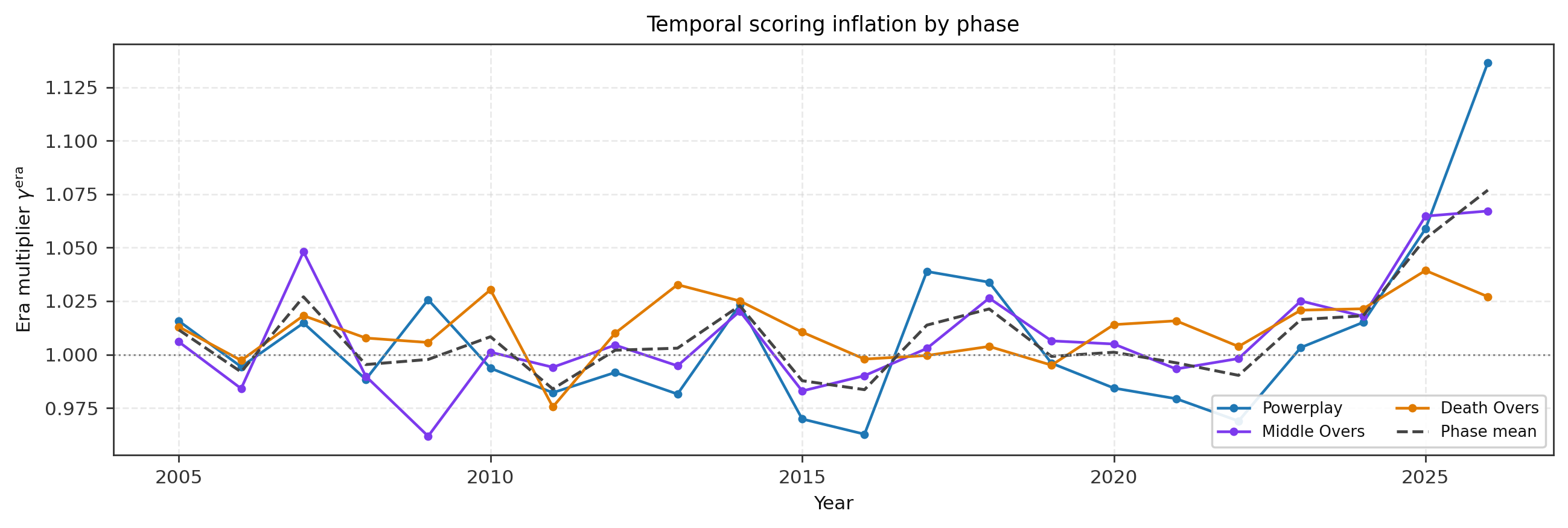}
\caption{The era multiplier $\gamma^{\mathrm{era}}$ by year, shown separately for
each match phase together with the phase mean (dashed). Scoring conditions are
broadly stationary across the sample with a mild inflation in the most recent
seasons. The phase~$\times$~innings scenario multipliers are listed exactly in
Table~\ref{tab:rae_cells}.}
\label{fig:rae_factors}
\end{figure}

\begin{finding}[Wicket pressure is phase-dependent]
\label{find:wicket}
The wicket factor (Table~\ref{tab:rae_wkt}) is sharp where
deliveries are plentiful and shrunk toward neutral where they are not. In the middle overs it
declines almost monotonically from $1.136$ (0 down) to $0.657$ (9 down), so each further
wicket lowers cohort expected runs. The powerplay shows the same early decline
($1.047\to0.716$ across the first five wickets), while its deep-collapse cells (6 or more down
inside the first six overs, a rare event) carry little information and are pulled back towards
$1$. The death overs invert the top of the gradient: one wicket down $(1.164)$ scores above none
$(0.986)$, consistent with a side that accelerates once a set batter falls, before the decline
resumes. Banding the counts averages this phase-dependent
structure away, while conditioning on the exact count retains it, with the empirical Bayes shrinkage
absorbing the sparse cells automatically.
\end{finding}

\begin{finding}[Pace is more expensive than spin]
\label{find:pace}
Averaged over phases, right-arm pacers concede the most $(1.028)$ and finger spinners the
least (off-spin $0.989$, left-arm spin $0.968$), with left-arm pace $(0.998)$ and
leg-spin $(1.005)$ sitting near base: $\approx 6\%$ expected runs gap between
right-arm pace and left-arm finger spin, absorbed by the bowler-type factor so that
$\RAE$ neither rewards batters for merely facing more pace nor penalises bowlers for
bowling it.
\end{finding}

\subsection{Distribution of \texorpdfstring{$\RAE$}{RAE}}
\label{sec:val:dist}

Figure~\ref{fig:rae_dist} shows the ball-level and batter-level $\RAE$ distributions.

\begin{figure}[H]
\centering
\includegraphics[width=0.92\linewidth]{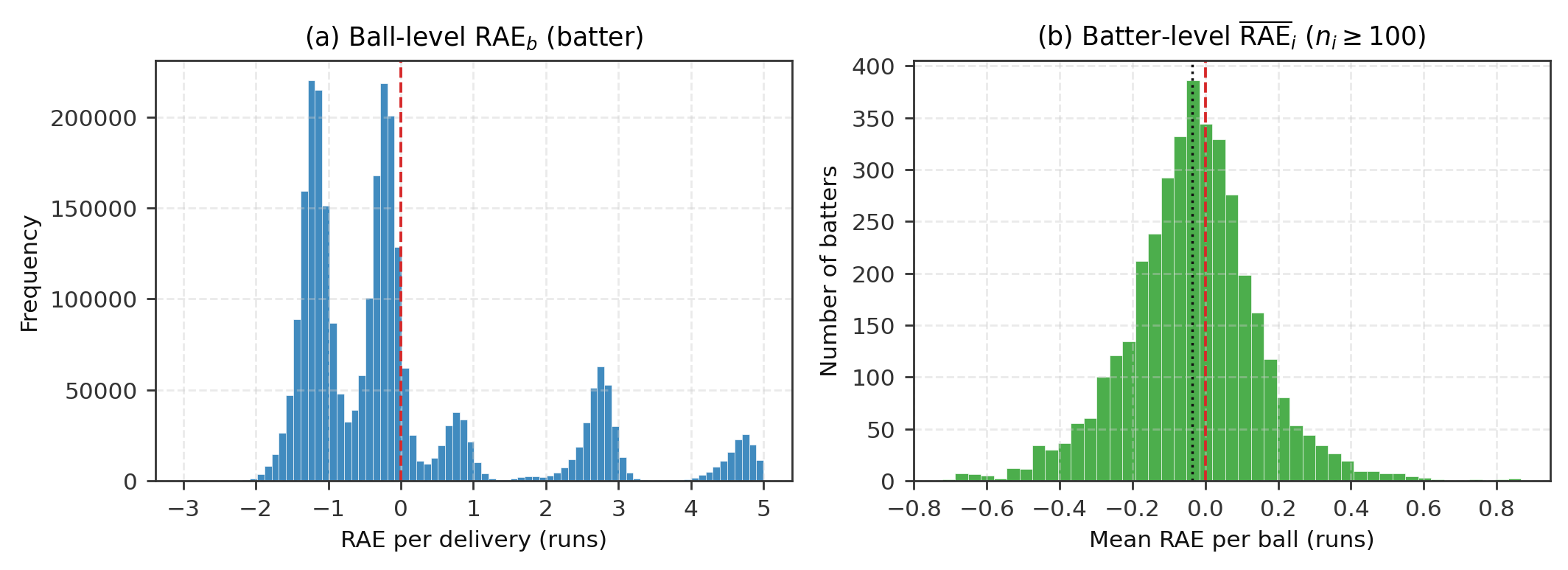}
\caption{(a) Ball-level batter $\RAE_b$ (truncated to $[-3,5]$ for display): a
right-skewed point-mass mixture centered at zero, the long right tail corresponding
to boundaries. (b) Batter-level $\overline{\RAE}_i$ for batters with $n_i\ge100$;
the dotted line marks the cohort mean.}
\label{fig:rae_dist}
\end{figure}

\begin{remark}[The residual is centered]
\label{rem:centred}
Across all $2{,}727{,}420$ legal deliveries, the mean ball-level batter residual is
$\overline{\RAE}_b=-0.0001$ runs, i.e.\ four orders of magnitude below the base
rate $\mu_0=1.2425$, with standard deviation $1.588$ and right-skew $1.58$; the
bowler residual is likewise centered $(-0.0005$, s.d.\ $1.582)$. This centering is a
property of the converged fit, whose Poisson score identities (Remark~\ref{rem:orderindep})
equate fitted and observed totals within every context cell; the skew and excess
mass at $\{-\mu_0,\,4,\,6\}$ reflect the discrete run distribution rather than model
misfit.
\end{remark}

\subsection{The wicket cost and restoration due to centering}
\label{sec:val:dar}
Across all $147{,}797$ bowler-credited dismissals in the dataset, the per-ball wicket cost
$\wc(s)$ evaluated at the dismissal state averages $2.65$ runs (median $2.56$, $90$th percentile
$5.34$). A dismissal, therefore, costs a batting side, on average, only a couple of
runs beyond the ball on which it falls, because most wickets remove middling batters in the
middle or late overs where the marginal cost of one more is small; the tail of high-value
dismissals (a set top-order batter removed early) reaches $15$ runs. This spread is produced
entirely by the state: the value function assigns every wicket its own worth from the balls and
wickets remaining, the phase, the innings, and the quality of the batter dismissed and their
partner.

The cost is highest in the powerplay and roughly flat thereafter (Table~\ref{tab:dar_over}). A
powerplay dismissal costs $\approx23\%$ more than a death-overs dismissal does, because removing a batter
early exposes the rest of the order to more of the innings. The gradient is not monotone through
the innings, however the middle overs are the cheapest, and the death overs recover slightly
because the few balls a late wicket forgoes are the highest-scoring of the innings, an effect
that partly offsets their small number.

\begin{table}[!ht]
\centering
\caption{Mean wicket cost $\wc(s)$ (runs) by over band, over all bowler-credited dismissals.
The cost is largest in the powerplay, dips through the middle overs, and recovers marginally at
the death, so it is not monotone in the balls remaining.}
\label{tab:dar_over}
\small
\begin{tabular}{lcccc}
\toprule
Over band & 1-6 (Powerplay) & 7-10 (Middle 1) & 11-15 (Middle 2) & 16-20 (Death) \\
\midrule
Mean $\wc(s)$ (runs) & $3.10$ & $2.49$ & $2.42$ & $2.52$ \\
\bottomrule
\end{tabular}
\end{table}

\begin{finding}[Wicket value is situational, not constant]
\label{find:dar_state}
No single `runs per wicket' number describes the data: the wicket cost ranges from near zero (a
tail-ender removed on the last ball of a settled match) to approximately $15.5$ runs (a set top-order batter
dismissed early), with a powerplay-to-death phase profile of $3.10\to2.52$ runs
(Table~\ref{tab:dar_over}) and a further, orthogonal dependence on the quality of the batter
dismissed. This is the behaviour a run value of a wicket should have, and it is why
the value function, rather than a flat exchange rate, is required.
\end{finding}

The realised total of these costs is what a naive symmetric aggregate would debit each batter
and credit each bowler. Doing so is what breaks role comparability. Figure~\ref{fig:centering}(a)
shows the two Impact distributions under the absolute construction $\sum\RAE\mp\realDAR$ across
qualifying IPL players ($137$ batters at $\ge500$ balls, $207$ bowlers at $\ge300$ balls). On this subset of players, Batting
Impact averages $-193$ runs (median $-159$, maximum $210$) while Bowling Impact averages $+225$
(median $+123$, maximum $1799$). The distributions barely overlap, and the ratio of the leading
bowler to the leading batter is roughly $8.6$ to $1$.

Subtracting the expected Dismissal Adjusted Runs ($\xDAR$) removes the artefact. Panel (b) shows the same
players under the centered aggregate of \eqref{eq:bat_impact}-\eqref{eq:bowl_impact}: batting
Impact now averages $+144$ (median $+87$, maximum $1022$) and bowling Impact $-15$ (median $-34$,
maximum $924$), so the two roles occupy a common range around the origin and their leaders differ
by less than $10\%$. The shift is not a rescaling but a re-centering: an average bowler who takes
the wickets his volume ordinarily yields, previously flattered by a large positive wicket total,
now sits near zero, while the strike bowlers who beat expectation stay high.

\begin{figure}[H]
\centering
\includegraphics[width=0.92\linewidth]{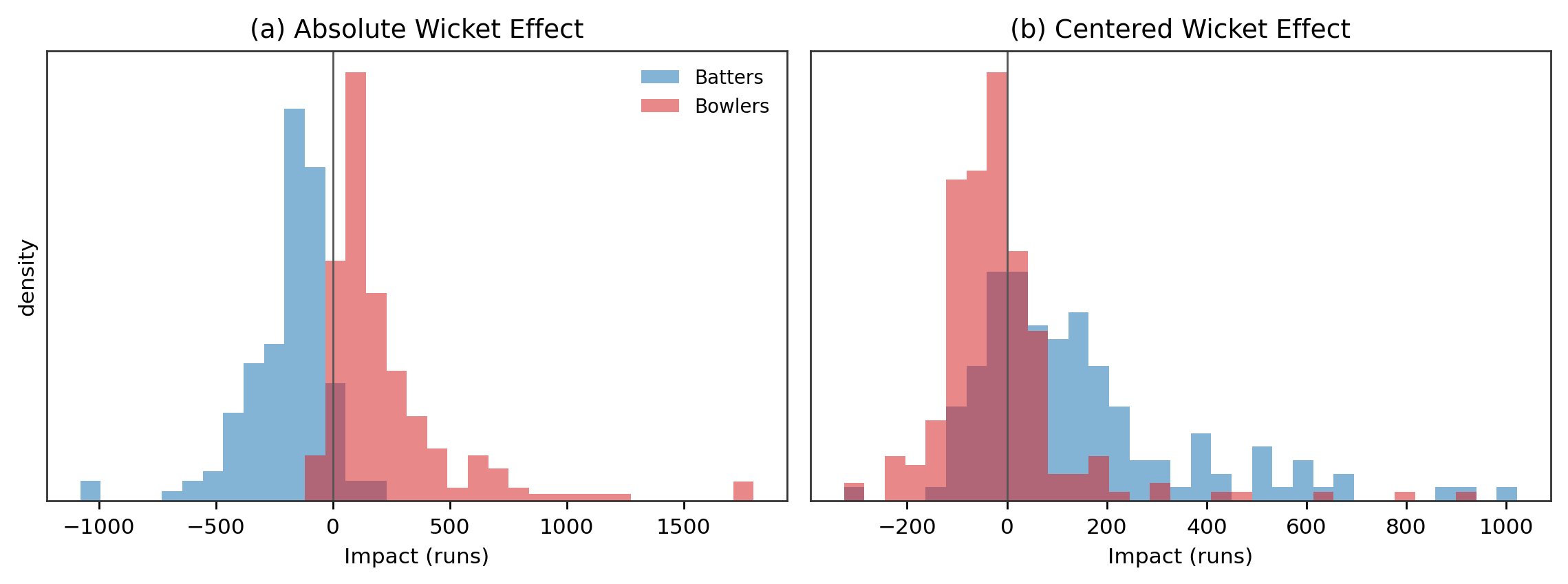}
\caption{Batting (blue) and bowling (red) Impact distributions over qualifying IPL players.
\textbf{(a)} The absolute construction $\sum\RAE\mp\realDAR$ places the roles on disjoint ranges: a
mean-zero run ledger combined with a one-signed wicket ledger. \textbf{(b)} The centered
construction $\sum\RAE\mp\DAR$ returns both roles to a shared, comparable scale about the
origin. Densities; the vertical line marks zero.}
\label{fig:centering}
\end{figure}

\begin{finding}[Centering the wicket ledger restores role comparability]
\label{find:centering}
Pricing wickets in runs is necessary but not sufficient for a comparable batting and bowling
scale: $\realDAR$ is a one-signed accumulation, and adding it to the mean-zero
$\RAE$ leaves batters and bowlers on ranges that differ by nearly an order of magnitude
(Figure~\ref{fig:centering}a). Subtracting each player's $\xDAR$, i.e.\ the martingale
compensator of Proposition~\ref{prop:centre}, returns both roles to mean-zero residuals on one
runs scale (Figure~\ref{fig:centering}b), so that a batter's and a bowler's Impact share a
common centre as well as a common unit.
\end{finding}

\subsection{Correlation structure and convergent validity}
\label{sec:val:corr}
Since the framework rests on two primitives, the natural validity checks are that (i)
each primitive agrees with the conventional statistic it is meant to context-adjust
(convergent validity), and (ii) the two sources of bowling value are distinct
from each other (discriminant validity). Tables~\ref{tab:bat_corr}-\ref{tab:bowl_corr}
report Pearson correlations across qualifying IPL players ($n=137$ batters at
$\ge500$ balls, $n=207$ bowlers at $\ge300$ balls).

\begin{table}[H]
\centering
\begin{minipage}{0.46\linewidth}
\centering
\captionsetup{justification=centering}
\caption{Batting Impact correlations}
\label{tab:bat_corr}
\small
\begin{tabular}{lccc}
\toprule
 & Impact & $\RAE$/Ball & Strike Rate \\
\midrule
Impact       & $1.00$ &        &       \\
$\RAE$/Ball  & $0.65$ & $1.00$ &       \\
Strike Rate           & $0.54$ & $0.82$ & $1.00$\\
\bottomrule
\end{tabular}
\end{minipage}\hfill
\begin{minipage}{0.52\linewidth}
\centering
\captionsetup{justification=centering}
\caption{Bowling Impact correlations}
\label{tab:bowl_corr}
\small
\setlength{\tabcolsep}{4pt}
\begin{tabular}{lcccc}
\toprule
 & Impact & $-\RAE$/Ball & $\DAR$/Ball & Economy \\
\midrule
Impact     & $1.00$ &        &        &       \\
$-\RAE$/Ball  & $0.73$ & $1.00$ &        &       \\
$\DAR$/Ball    & $0.13$ & $0.09$ & $1.00$ &       \\
Economy             & $-0.46$& $-0.68$& $-0.07$ & $1.00$\\
\bottomrule
\end{tabular}
\end{minipage}
\end{table}

\begin{finding}[$\RAE$ is essentially a context-adjusted run rate]
\label{find:convergent}
$\RAE/\text{Ball}$ correlates $0.82$ with the batter's raw strike rate and the bowler's runs saved
correlate $-0.68$ with economy. Each primitive tracks the conventional rate it refines,
confirming that $\RAE$ measures scoring rate or run-prevention rate. The correlations are strong but far from one, and the gap is exactly the contextual
adjustment (era, phase, wicket pressure, venue, opposition) that any raw performance metric ignores. Batting Impact correlates only $0.54$ with strike rate, because it also
carries the centered wicket ledger, through survival credit and dismissal charge, which strike
rate cannot see.
\end{finding}

\begin{finding}[Wicket-taking and run prevention are nearly orthogonal]
\label{find:two_sources}
On the bowling side, the two value sources are almost uncorrelated. The net Dismissal Adjusted Runs per
ball correlates only $0.09$ with runs saved per ball. Wicket-taking is therefore a separate dimension of
bowling value, not a by-product of containment: a miserly bowler need not
be a strike bowler (e.g. Jasprit Bumrah), nor the reverse (e.g. Arshdeep Singh). This justifies Bowling Impact carrying both terms in \eqref{eq:bowl_impact}: economy alone (which correlates $-0.46$ with Bowling Impact) would misjudge the wicket-hunters, and neither term subsumes the other. The more modest correlation of net Dismissal Adjusted Runs with Bowling Impact itself ($0.13$) is a direct consequence of centering. Once the expected haul is removed, most bowlers contribute little through wickets, and the minority of strike bowlers does the separating.
\end{finding}

\subsection{Leaderboards based on the Indian Premier League}
\label{sec:val:lead}
The framework's rankings should recover cricketing consensus while remaining reproducible
from the ball-by-ball record. Among the 18 domestic and franchise leagues across Australia, Bangladesh, England, India, Ireland, Nepal, New Zealand, Pakistan, South Africa, Sri Lanka and West Indies available in our dataset, the Indian Premier League (IPL) records the largest sample size, with $295{,}629$ deliveries ($285{,}761$ of them legal) spanning $18$ consecutive seasons ($2008$-$2026$), involving $769$ batters and $603$ bowlers.

I report IPL leaderboards ($2008$-present) as the primary case study, both as per-ball rate and as cumulative Impacts, and the two answer different questions (peak rate versus sustained contribution).

\begin{table}[!ht]
\centering
\caption{Top batters in IPL sorted by (a) per-ball rate $\RAE$/Ball; (b) cumulative Bat Impact
$\OmegaBat=\RAE-\DAR^{\text{bat}}$ (min $500$ balls).}
\label{tab:ipl_bat}
\begin{subtable}{0.18\linewidth}
\centering
\subcaption{By $\RAE$/Ball}
\label{tab:ipl_rae}
\scriptsize
\begin{tabular}{rlrrr}
\toprule
Rank & Batter & $\RAE$/Ball & Runs & SR \\
\midrule
1  & Andre Russell     & $0.368$ & $2651$ & 175.5 \\
2  & Virender Sehwag   & $0.341$ & $2728$ & 155.4 \\
3  & Sunil Narine      & $0.278$ & $1820$ & 165.2 \\
4  & Phil Salt         & $0.276$ & $1258$ & 175.7 \\
5  & Abhishek Sharma   & $0.271$ & $2379$ & 171.3 \\
6  & Glenn Maxwell     & $0.257$ & $2819$ & 155.1 \\
7  & Sanath Jayasuriya & $0.238$ & $\ 768$ & 144.4 \\
8  & Travis Head       & $0.221$ & $1556$ & 170.1 \\
9  & Nicholas Pooran   & $0.215$ & $2527$ & 164.0 \\
10 & Tim David         & $0.210$ & $1151$ & 176.5 \\
11 & AB de Villiers    & $0.209$ & $5162$ & 151.9 \\
12 & Chris Gayle       & $0.204$ & $4965$ & 149.2 \\
\bottomrule
\end{tabular}
\end{subtable}\hfill
\begin{subtable}{0.52\linewidth}
\centering
\subcaption{By cumulative Bat Impact $\OmegaBat$}
\label{tab:ipl_imp}
\scriptsize
\setlength{\tabcolsep}{3pt}
\begin{tabular}{rlrrrrrr}
\toprule
Rank & Batter & $\OmegaBat$ & $\RAE$ & $\realDAR$ & $\xDAR$ & Runs & SR \\
\midrule
1  & AB de Villiers    & $1022$ & $713$  & $514$  & $\ 824$ & $5162$ & 151.9 \\
2  & David Warner      & $\ 911$& $540$  & $762$  & $1133$  & $6565$ & 139.6 \\
3  & Chris Gayle       & $\ 891$& $684$  & $568$  & $\ 775$ & $4965$ & 149.2 \\
4  & KL Rahul          & $\ 672$& $204$  & $550$  & $1017$  & $5815$ & 139.1 \\
5  & Andre Russell     & $\ 667$& $543$  & $333$  & $\ 457$ & $2651$ & 175.5 \\
6  & Suryakumar Yadav  & $\ 619$& $424$  & $460$  & $\ 655$ & $4581$ & 149.1 \\
7  & MS Dhoni          & $\ 601$& $\ 68$ & $535$  & $1069$  & $5439$ & 137.5 \\
8  & Virat Kohli       & $\ 601$& $\ 11$ & $1013$ & $1604$  & $9336$ & 134.8 \\
9  & Suresh Raina      & $\ 584$& $371$  & $678$  & $\ 891$ & $5528$ & 136.7 \\
10 & Jos Buttler       & $\ 545$& $312$  & $508$  & $\ 741$ & $4646$ & 149.7 \\
11 & Virender Sehwag   & $\ 523$& $598$  & $509$  & $\ 434$ & $2728$ & 155.4 \\
12 & Shane Watson      & $\ 511$& $390$  & $538$  & $\ 659$ & $3874$ & 138.0 \\
\bottomrule
\end{tabular}
\end{subtable}
\end{table}

\begin{table}[!ht]
\centering
\caption{Top bowlers in IPL sorted by (a) per-ball run prevention $-\RAE$/Ball; (b) cumulative
Bowl Impact $\OmegaBowl=-\sum\RAE+\DAR^{\text{bowl}}$ (min $300$ balls).}
\label{tab:iplbowl}
\begin{subtable}{0.37\linewidth}
\centering
\subcaption{By $-\RAE$/Ball}
\label{tab:iplbowl_rate}
\scriptsize
\setlength{\tabcolsep}{3pt}
\begin{tabular}{rlrrr}
\toprule
Rank & Bowler & $-\RAE$/Ball & Wkts & Econ \\
\midrule
1  & Matheesha Pathirana & $0.292$ & \ 47 & 7.73 \\
2  & Jasprit Bumrah      & $0.286$ & 187 & 7.07 \\
3  & Sunil Narine        & $0.227$ & 207 & 6.66 \\
4  & Jofra Archer        & $0.217$ & \ 84 & 7.78 \\
5  & Anil Kumble         & $0.200$ & \ 45 & 6.38 \\
6  & Lasith Malinga      & $0.199$ & 170 & 6.77 \\
7  & Glenn McGrath       & $0.192$ & \ 12 & 6.52 \\
8  & Bhuvneshwar Kumar   & $0.186$ & 226 & 7.44 \\
9  & Dale Steyn          & $0.180$ & \ 97 & 6.61 \\
10 & M.\ Muralitharan    & $0.161$ & \ 63 & 6.45 \\
11 & Lungi Ngidi         & $0.151$ & \ 42 & 8.16 \\
12 & Rashid Khan         & $0.147$ & 179 & 7.17 \\
\bottomrule
\end{tabular}
\end{subtable}\hfill
\begin{subtable}{0.55\linewidth}
\centering
\subcaption{By cumulative Bowl Impact $\OmegaBowl$}
\label{tab:iplbowl_imp}
\scriptsize
\setlength{\tabcolsep}{3pt}
\begin{tabular}{rlrrrrrr}
\toprule
Rank & Bowler & $\OmegaBowl$ & $-\RAE$ & $\realDAR$ & $\xDAR$ & Econ & Wkts \\
\midrule
1  & Jasprit Bumrah      & $924$ & $1049$ & $750$ & $\ 875$ & 7.07 & 187 \\
2  & Sunil Narine        & $802$ & $1059$ & $728$ & $\ 985$ & 6.66 & 207 \\
3  & Bhuvneshwar Kumar   & $621$ & $\ 858$& $878$ & $1115$  & 7.44 & 226 \\
4  & Lasith Malinga      & $476$ & $\ 566$& $467$ & $\ 556$ & 6.77 & 170 \\
5  & Rashid Khan         & $430$ & $\ 520$& $701$ & $\ 790$ & 7.17 & 179 \\
6  & Dale Steyn          & $307$ & $\ 395$& $296$ & $\ 384$ & 6.61 & \ 97 \\
7  & Jofra Archer        & $296$ & $\ 345$& $371$ & $\ 419$ & 7.78 & \ 84 \\
8  & R.\ Ashwin          & $242$ & $\ 521$& $647$ & $\ 925$ & 6.95 & 187 \\
9  & M.\ Muralitharan    & $197$ & $\ 246$& $155$ & $\ 204$ & 6.45 & \ 63 \\
10 & Varun Chakravarthy  & $176$ & $\ 180$& $454$ & $\ 458$ & 7.39 & 111 \\
11 & Matheesha Pathirana & $175$ & $\ 208$& $154$ & $\ 188$ & 7.73 & \ 47 \\
12 & Anil Kumble         & $169$ & $\ 194$& $106$ & $\ 132$ & 6.38 & \ 45 \\
\bottomrule
\end{tabular}
\end{subtable}
\end{table}

Reading Table~\ref{tab:ipl_bat}(b) through its decomposition separates two distinct routes to a
high Bat Impact. The first is contextual run production. Virender Sehwag reaches $523$ almost
entirely through $\RAE$ ($598$, the second largest in the table), and is the only batter in the table whose wicket ledger is a net charge: his realised effect ($\realDAR=509$) exceeds its
expectation ($\xDAR=434$), so being an attacking opener, he was naturally dismissed more, and in higher-value states, than the league baseline predicts, a tax his scoring more than repays. That a batter who
last appeared in the IPL in $2015$ still ranks this high follows directly from the era factor
$\gamma^{\mathrm{era}}$: his stroke-play is scored against the lower bar of his own pre-inflation
seasons rather than the high-scoring modern game, so he is neither flattered by a friendly era nor
penalised for having missed the recent run-glut. The second route is survival. Virat Kohli
($\RAE=11$) and MS Dhoni ($\RAE=68$) add almost nothing above par per delivery, yet rank
equal-seventh at $601$, because their wicket ledger is a large credit: the expected effect far
exceeds the realised one ($\xDAR=1604$ against $\realDAR=1013$ for Kohli, $1069$ against $535$ for
Dhoni). Both figures are large because they scale with deliveries faced, and Kohli has faced more
than any batter in IPL history; their difference $\xDAR-\realDAR$ is the run value of the dismissals
each avoided relative to an average batter over that volume, which a raw wicket count cannot
see. KL Rahul ($672$, credit $467$) is the same anchor profile. AB de Villiers tops the
board by drawing on both routes at once, a large $\RAE$ ($713$) alongside a survival credit of
$310$.

The bowling decomposition reads differently from a raw wicket count. Jasprit Bumrah and Sunil
Narine lead on elite containment, the two largest run-saved totals in the league
($-\RAE$ around $1050$). Their realised wicket hauls sit close to, but do not exceed, what their heavy
volumes predict ($\realDAR=750$ against $\xDAR=875$ for Bumrah), so it is run prevention, not
wicket-taking, that elevates them. One thing the tables do not show directly: Yuzvendra Chahal took the most wickets of any bowler in IPL ($233$, $\realDAR=840$), yet his
expected haul over the same overs is almost identical ($\xDAR=845$), so he nets nothing significant
through wickets and ranks on his modest containment alone. R Ashwin shows the sharper edge of the
same mechanism: his realised wickets ($647$) fall well short of what his large volume would
ordinarily yield ($925$), so his standing rests on containment rather than strike. The pattern
holds across the table: once the expected haul is removed, the net wicket ledger is a modest,
usually negative adjustment, and it is run prevention that separates the leading bowlers, with the
wicket channel penalising those whose strike output lags their workload. Because Impact is a
cumulative runs total, longevity still contributes, but only through value added above
expectation, not through the sheer number of balls bowled.

As a sharper face-validity test, I rank single
innings by their batting Impact. Table~\ref{tab:top_innings} lists the top knocks of IPL history sorted by their Batting Impacts. It recovers the canonical great
innings, i.e.\ Chris Gayle's $175^\ast$ (the highest score in T20 history till date) on top, followed by Brendon McCullum's $158^\ast$ in the league opening match, and spans eras from $2008$ to $2025$ because $\RAE$
is measured against the scoring conditions of each era.

\begin{table}[!ht]
\centering
\caption{Most impactful individual IPL innings by batting Impact (innings
$\RAE-\DAR^{\text{bat}}$, in runs). Cost is the realised wicket value $\realDAR$ at the
dismissal, $0$ for an unbeaten innings; Credit is the expected wicket value $\xDAR$ accrued over
the deliveries faced, the survival dividend.}
\label{tab:top_innings}
\small
\begin{tabular}{rlrrrrrrrr}
\toprule
Rank & Batter & Opposition & Date & Runs & Balls & $\RAE$ & Cost & Credit & Impact \\
\midrule
1  & Chris Gayle      & PWI & 2013-04-23 & 175 & 66 & 76 & 0 & 15 & $91$ \\
2  & Brendon McCullum & RCB & 2008-04-18           & 158 & 73 & 60 & 0 & 17 & $77$ \\
3  & Virender Sehwag  & DC & 2011-05-05        & 119 & 56 & 55 & 4 & 15 & $66$ \\
4  & AB de Villiers   & GL & 2016-05-14       & 129 & 52 & 53 & 0 & 11 & $64$ \\
5  & David Miller     & RCB & 2013-05-06           & 101 & 38 & 54 & 0 & \ 9 & $63$ \\
6  & Yusuf Pathan     & MI & 2010-03-13        & 100 & 37 & 53 & 2 & 11 & $62$ \\
7  & David Warner     & KKR & 2017-04-30       & 126 & 59 & 52 & 3 & 13 & $62$ \\
8  & Priyansh Arya    & CSK & 2025-04-08       & 103 & 42 & 52 & 6 & 14 & $60$ \\
9  & AB de Villiers   & MI & 2015-05-10        & 133 & 59 & 46 & 0 & 13 & $59$ \\
10 & Andrew Symonds   & RR & 2008-04-24     & 117 & 53 & 45 & 0 & 13 & $58$ \\
\bottomrule
\end{tabular}
\end{table}

The innings ranking rewards contextual run value and survival rather than milestones, and this is
clearest among the knocks Table~\ref{tab:top_innings} leaves out. The most impactful sub-hundred
score in IPL history is Suresh Raina's $87$ from $25$ balls against Kings XI Punjab in $2014$ (Impact
$56$), almost at a level with a few lower hundreds in and around the table, despite Raina facing barely a quarter of the deliveries. Hardik Pandya's $91\,(34)$ and Rashid Khan's $79^\ast\,(32)$ reach the same Impact of $56$. A twenty-five-ball assault clears a far higher contextual bar than a measured hundred and the metric
records it as such. The costliest dismissals tell a different story: they are not the failures of
tail-enders but the early falls of set or top-order batters in the powerplay, where the wicket
value peaks. Travis Head's $22\,(12)$ cost $14$ runs, the largest single-dismissal debit in the
sample, and even the rare early failures of great players surface here, with Chris Gayle's
$10\,(10)$ and a sixteen-ball $16$ from MS Dhoni each costing $12$. The survival dividend, finally,
is largest for long unbeaten innings: Shikhar Dhawan's $99^\ast\,(66)$ earns a credit of $34$
runs and Dhoni's $63^\ast\,(45)$ earns a credit of $28$, where the batters were repaid for carrying their wicket through deliveries an average batter would not have survived.

Aggregating Impact over an entire IPL season rather than a
single innings gives a longer-run view of contribution. Tables~\ref{tab:ipl_batseason}
and~\ref{tab:ipl_bowlseason} list the most impactful batting and bowling seasons in IPL history,
each decomposed into its run and wicket components.

\begin{table}[H]
\centering
\caption{Most impactful individual IPL batting seasons, by season Bat Impact
$\OmegaBat=\RAE-(\realDAR-\xDAR)$ (minimum $150$ balls faced).}
\label{tab:ipl_batseason}
\footnotesize
\setlength{\tabcolsep}{4.5pt}
\begin{tabular}{rllrrrrrrrr}
\toprule
\# & Batter & Team & Season & Runs & SR & $\RAE$ & $\RAE$/B & $\realDAR$ & $\xDAR$ & $\OmegaBat$ \\
\midrule
1  & Vaibhav Suryavanshi & RR   & 2026 & 776 & 237.3 & 254 & $0.78$ & 73 & \ 79 & 260 \\
2  & Chris Gayle         & RCB  & 2011 & 608 & 183.1 & 192 & $0.58$ & 38 & \ 69 & 223 \\
3  & Andre Russell       & KKR  & 2019 & 511 & 204.4 & 170 & $0.68$ & 32 & \ 82 & 220 \\
4  & Chris Gayle         & RCB  & 2012 & 733 & 160.7 & 158 & $0.35$ & 49 & 109 & 218 \\
5  & David Warner        & SRH  & 2016 & 848 & 151.4 & 147 & $0.26$ & 70 & 130 & 208 \\
6  & Virat Kohli         & RCB  & 2016 & 973 & 152.0 & 110 & $0.17$ & 46 & 141 & 205 \\
7  & AB de Villiers      & RCB  & 2016 & 687 & 168.8 & 155 & $0.38$ & 46 & \ 95 & 204 \\
8  & Glenn Maxwell       & KXIP & 2014 & 552 & 187.8 & 194 & $0.66$ & 63 & \ 65 & 196 \\
9  & Rishabh Pant        & DC   & 2018 & 684 & 173.6 & 157 & $0.40$ & 54 & \ 89 & 192 \\
10 & Chris Gayle         & RCB  & 2013 & 720 & 156.9 & 112 & $0.24$ & 57 & 107 & 163 \\
\bottomrule
\end{tabular}
\end{table}

\begin{table}[!ht]
\centering
\caption{Most impactful individual IPL bowling seasons, by season Bowl Impact
$\OmegaBowl=-\RAE+(\realDAR-\xDAR)$ (minimum $120$ balls bowled).}
\label{tab:ipl_bowlseason}
\footnotesize
\setlength{\tabcolsep}{4.5pt}
\begin{tabular}{rllrrrrrrrr}
\toprule
\# & Bowler & Team & Season & Wkts & Econ & $-\RAE$ & $-\RAE$/B & $\realDAR$ & $\xDAR$ & $\OmegaBowl$ \\
\midrule
1  & Jasprit Bumrah    & MI  & 2024 & 20 & 6.08 & 159 & $0.51$ & \ 86 & \ 83 & 162 \\
2  & Sunil Narine      & KKR & 2013 & 25 & 5.42 & 152 & $0.40$ & \ 61 & \ 66 & 148 \\
3  & Jasprit Bumrah    & MI  & 2025 & 19 & 6.57 & 141 & $0.50$ & \ 76 & \ 71 & 146 \\
4  & Sunil Narine      & KKR & 2012 & 26 & 5.37 & 130 & $0.37$ & \ 67 & \ 53 & 144 \\
5  & Jasprit Bumrah    & MI  & 2019 & 20 & 6.48 & 158 & $0.42$ & \ 66 & \ 85 & 139 \\
6  & Bhuvneshwar Kumar & RCB & 2026 & 29 & 7.73 & 106 & $0.28$ & 151 & 119 & 138 \\
7  & Rashid Khan       & SRH & 2020 & 20 & 5.27 & 138 & $0.36$ & \ 73 & \ 79 & 132 \\
8  & Mohammed Siraj    & RCB & 2023 & 19 & 6.81 & 114 & $0.38$ & \ 88 & \ 80 & 123 \\
9  & Sunil Narine      & KKR & 2026 & 16 & 6.60 & 138 & $0.45$ & \ 61 & \ 76 & 123 \\
10 & Lasith Malinga    & MI  & 2011 & 29 & 5.61 & 107 & $0.28$ & \ 83 & \ 71 & 119 \\
\bottomrule
\end{tabular}
\end{table}

Season Impact rewards contextual efficiency over raw volume. The
most impactful batting season in IPL history belongs not to the leading run-scorer but to Vaibhav
Suryavanshi's $776$ runs at a strike rate of $237$ in $2026$ (Impact $260$), whose per-ball value
of $0.78$ is much higher than that of Virat Kohli's record $973$-run season in $2016$ (Impact $205$, per-ball
value $0.17$), which ranks only sixth. Several of the most impactful seasons rest on modest
aggregate returns, because what those runs were worth, ball for ball and situation for
situation, far outstripped a longer but flatter accumulation, for example, Andre Russell's $511$ runs in $2019$ (Impact $220$) and Glenn Maxwell's $552$
in $2014$ (Impact $196$). A few names recur, a sign of sustained rather than one-off impact. Chris Gayle holds three of the $10$ most impactful batting seasons
($2011$-$2013$) and Jasprit Bumrah holds three of the top $10$ bowling seasons, with Sunil Narine close
behind. The two routes to bowling value reappear at the season scale as well. Bumrah leads through
elite containment at an economy near $6$, whereas Bhuvneshwar Kumar's $2026$ ($29$ wickets) is
carried by a large wicket haul despite a higher economy. Finally, the same players surface on both
leaderboards: Narine ranks among the most impactful bowling seasons and, as an opener, among the
most valuable batting rates, and Pat Cummins appears on both sides too, the mark of season-long
all-round contribution.

\section{Concluding remarks}
\label{sec:limits}

The framework reduces T20 player evaluation to two ball-level primitives, both measured as runs
above expectation, and its strengths follow from that single design choice. $\RAE$ prices the runs
on every delivery against a contextual, likelihood-grounded expectation whose per-cell update is an
exact conditional Poisson maximum-likelihood multiplier (Proposition~\ref{prop:poisson}). The
stratified, nonparametric baseline makes it robust to playing conditions, phase and opposition
without assuming any distribution for the lumpy, over-dispersed run outcome, and a single
opposition symmetry lets the same residual serve both batting and bowling. Dismissal Adjusted Runs prices the
runs at every dismissal through a batting-side value function and then nets a player's realised
total against its hazard compensator, so the quantity that enters evaluation is a run-weighted
martingale residual, mean-zero by construction (Proposition~\ref{prop:centre}). This centering is
what turns a shared unit into a shared scale. Both primitives become above-expectation residuals,
so Bat Impact $=\sum\RAE-\DAR^{\text{bat}}$ and Bowl Impact $=-\sum\RAE+\DAR^{\text{bowl}}$ are
comparable in centre as well as in dimension (Finding~\ref{find:centering}), and a batter is
charged for a dismissal only to the extent it exceeds what his match situations warranted.

Some further properties follow from the same construction. Empirical-Bayes shrinkage, applied both to the
factor multipliers of \eqref{eq:rae_shrink} and to the player rates feeding the value function,
corrects the small-sample bias that inflates naive averages, and the symmetric opposition factors
remove any need for hand-set phase weights, since difficulty is carried inside the expectation
itself (Remark~\ref{rem:nophaseweight}). Because Dismissal Adjusted Runs is priced in the same unit as
$\RAE$, a bowler's two separate sources of value (run prevention and wicket-taking),
near-orthogonal by Finding~\ref{find:two_sources}, combine additively with no weighting constant to
defend, and every wicket is valued on its own match state rather than by a flat exchange rate
(Finding~\ref{find:dar_state}). Nothing is tuned in the construction: the weight of a wicket relative
to a run is a consequence of the value function and the hazard baseline, not a parameter I set.

The framework's limitations point directly to its extensions. The expectation model treats delivery
context as exogenous to player identity, whereas in practice bowlers adjust line and length to
specific batters; a batter who forces bowlers into uncomfortable areas may carry a slightly deflated
$\hat\mu(\bm{x}_b)$, and a structural model of bowler strategy would address the resulting
confounding. The second-innings value function conditions on the innings alone, capturing average
chase behaviour through innings-specific transition models but not the live required run rate; a
finer chase state, valuing runs and wickets against the current target in the spirit of the Pressure
Index of \citet{bandyopadhyay2026applications}, is a natural refinement. The value function itself tracks batter
quality by discrete tiers and treats every dismissal as the striker's, so run-outs of the
non-striker, finer within-tier heterogeneity and a full ball-by-ball settling model are left aside,
though these are secondary to the state variables that dominate a wicket's worth. On the input side,
$\RAE$ uses only always-available fields so that every match is scored identically; the line, length
and stroke-geometry data present for a minority of deliveries could sharpen the expected runs model
on the annotated subset, at the cost of comparability across the dataset. This extended $\RAE$ model, together with the evidence for adopting line and length as the additional covariates and for confining the extension to batting, is developed in Section~\ref{sec:supp:mrae} in the supplementary materials.

Two choices shape how the numbers should be read. The
expected $\DAR$ uses a league dismissal hazard $\bar h_{\varphi,e}$ conditioned on phase and
innings only; the coarseness is deliberate, since a context-level baseline lets batting skill
surface as survival above the league rate rather than being absorbed into a player-specific
expectation, but it leaves the survival credit indifferent to finer state such as the wicket count
or the tier of the batter at the crease. Conditioning $\bar h$ on the richer state the value
function already tracks would sharpen the compensator, at the cost of moving the baseline closer to
the player and diluting the skill signal it is meant to expose; where to set that resolution is
a modelling choice I leave open. A related consequence is that cumulative Impact is longevity-aware
by construction. Both the run value and the survival credit accumulate with volume, so a durable
accumulator can top the cumulative board on sustained competence while the per-ball rate isolates
peak quality, and the two are meant to be read together. Finally, because both primitives are
defined relative to the average player in the chosen data slice, cross-competition comparisons
inherit a system-wide competition-strength discrepancy, and a standard-adjusted baseline is planned.


\section*{Data availability statement}
The data used in this research have been obtained from three sources: baseline ball-by-ball data from the Cricsheet website (link: \href{https://cricsheet.org/matches/}{https://cricsheet.org/matches/}), granular ball-by-ball data from Dr. Himanish Ganjoo's collection of cricket data (link: \href{https://himanishganjoo.com/cricket-data/}{https://himanishganjoo.com/cricket-data/}) with his authorization and assistance, and hawk-eye ball-tracking data for IPL 2026 from the IPL website (link: \href{https://www.iplt20.com/matches/results}{https://www.iplt20.com/matches/results}). The cleaned data, along with all relevant Python code and other files, will be made publicly available in the author's GitHub repository.

\bibliographystyle{apalike}
\bibliography{references}

\clearpage
\setcounter{section}{0}\setcounter{subsection}{0}
\setcounter{table}{0}\setcounter{figure}{0}\setcounter{equation}{0}
\renewcommand{\thesection}{S\arabic{section}}
\renewcommand{\thesubsection}{S\arabic{section}.\arabic{subsection}}
\renewcommand{\thetable}{S\arabic{table}}
\renewcommand{\thefigure}{S\arabic{figure}}
\renewcommand{\theequation}{S\arabic{equation}}

\begin{center}
{\LARGE\bfseries Supplementary materials}
\end{center}
\vspace{1.6em}

\section{Model specification for reproducibility}
\label{sec:supp:spec}

This section records the fitted values of every quantity the estimators of
Sections~\ref{sec:rae}--\ref{sec:dar} produce, so that the numbers reported in the paper can be
reproduced exactly. Unless stated otherwise, each quantity is estimated on the full dataset of
$2{,}727{,}420$ legal deliveries of Section~\ref{sec:data}. The estimators themselves, i.e.\ the shrunk
backfitting for $\RAE$ and the tiering, transition model, and backward recursion for Dismissal Adjusted
Runs, are given in the main text and are not repeated. The three high-cardinality $\RAE$
factors (venue and the two opposition identities, with $1{,}268$, $7{,}282$, and $9{,}740$ cells)
are too numerous to list; each is estimated by the shrinkage rule~\eqref{eq:rae_shrink} and its
fitted extrema appear in Table~\ref{tab:rae_factors}.

\subsection{Global constants and fixed hyperparameters}
\label{sec:supp:const}

Every tuning quantity in the framework is either estimated from the data or fixed at a value stated
here; there are no hand-set weights. Table~\ref{tab:supp:const} collects the fixed constants. The
per-factor $\RAE$ shrinkage strengths $\kappa_f$ are themselves data-estimated by empirical Bayes
and are reported with the factor model in Table~\ref{tab:rae_factors}.

\begin{table}[H]
\centering
\caption{Global constants and fixed hyperparameters.}
\label{tab:supp:const}
\small
\begin{tabular}{lll}
\toprule
Symbol & Value & Meaning \\
\midrule
$\mu_0$              & $1.2425$        & grand mean runs per legal ball \\
$\Var(r)$            & $2.578$         & run variance (overdispersion scale) \\
$Q$                  & $6$             & batter quality tiers \\
$M_0$                & $120$           & legal balls in a T20 innings \\
phases               & $3$             & powerplay ($\leq$ over 6), middle (7--15), death (16--20) \\
$E_{\max}$           & $6$             & cap on the freshness horizon in \eqref{eq:delta_fresh} \\
venue window         & $3$ years       & bucket width for the venue factor \\
$\kappa_\mu$         & $60$            & EB pseudo-count for the batter scoring rate $\mu_i$ \eqref{eq:dar_rates} \\
$\kappa_\eta$        & $40$            & EB pseudo-count for the hazard $\eta_i$ and for $h_{i,\varphi,e}$ \\
$\kappa_\pi$         & $200$           & Dirichlet pseudo-count pooling $\pi_{r\mid i,\varphi,e}$ toward its phase mean \\
strike-share clip    & $[0.25,\,0.75]$ & bounds on the crease-window strike share $s_i$ \\
$\kappa_f$           & data-estimated  & per-factor $\RAE$ shrinkage (Table~\ref{tab:rae_factors}) \\
\bottomrule
\end{tabular}
\end{table}

\subsection{Expected-runs factor estimates}
\label{sec:supp:raefac}

The scenario and bowler-type multipliers appear in Table~\ref{tab:rae_cells} and the wicket factor
in Table~\ref{tab:rae_wkt}. Two further pieces complete the specification of the interpretable
factors. The bowler-type factor is resolved by phase in Table~\ref{tab:supp:btype} (Table~\ref{tab:rae_cells}
reports its phase average). The era factor, plotted in Figure~\ref{fig:rae_factors}, is listed in
full by year and phase in Table~\ref{tab:supp:era}.

\begin{table}[H]
\centering
\caption{Bowler-type multiplier $\gamma^{\mathrm{type}}$ by phase and bowling sub-type.}
\label{tab:supp:btype}
\small
\begin{tabular}{lccc}
\toprule
Sub-type & Powerplay & Middle & Death \\
\midrule
Right-arm pace  & 1.016 & 1.025 & 1.033 \\
Left-arm pace   & 0.951 & 1.026 & 1.007 \\
Leg-spin        & 1.056 & 0.998 & 0.950 \\
Off-spin        & 1.005 & 0.988 & 0.961 \\
Left-arm spin   & 0.988 & 0.951 & 0.953 \\
Other           & 0.990 & 1.001 & 1.009 \\
Unknown         & 0.958 & 0.985 & 1.011 \\
\bottomrule
\end{tabular}
\end{table}

\begin{table}[H]
\centering
\caption{Era multiplier $\gamma^{\mathrm{era}}$ by year and phase, with the phase means.}
\label{tab:supp:era}
\small
\begin{tabular}{lcccc}
\toprule
Year & Powerplay & Middle & Death & Mean \\
\midrule
2005 & 1.015 & 1.007 & 1.014 & 1.012 \\
2006 & 0.994 & 0.985 & 0.995 & 0.991 \\
2007 & 1.016 & 1.050 & 1.020 & 1.029 \\
2008 & 0.988 & 0.991 & 1.005 & 0.995 \\
2009 & 1.027 & 0.962 & 1.003 & 0.997 \\
2010 & 0.995 & 1.002 & 1.030 & 1.009 \\
2011 & 0.981 & 0.993 & 0.972 & 0.982 \\
2012 & 0.992 & 1.005 & 1.007 & 1.001 \\
2013 & 0.981 & 0.994 & 1.030 & 1.002 \\
2014 & 1.023 & 1.020 & 1.022 & 1.022 \\
2015 & 0.969 & 0.983 & 1.008 & 0.987 \\
2016 & 0.963 & 0.991 & 0.995 & 0.983 \\
2017 & 1.039 & 1.004 & 0.997 & 1.013 \\
2018 & 1.034 & 1.027 & 1.003 & 1.021 \\
2019 & 0.995 & 1.006 & 0.993 & 0.998 \\
2020 & 0.984 & 1.005 & 1.012 & 1.000 \\
2021 & 0.979 & 0.993 & 1.014 & 0.995 \\
2022 & 0.969 & 0.998 & 1.004 & 0.990 \\
2023 & 1.004 & 1.026 & 1.021 & 1.017 \\
2024 & 1.016 & 1.019 & 1.023 & 1.019 \\
2025 & 1.060 & 1.066 & 1.042 & 1.056 \\
2026 & 1.137 & 1.068 & 1.028 & 1.078 \\
\bottomrule
\end{tabular}
\end{table}

\subsection{Value-function transition model}
\label{sec:supp:dar}

The batting-side value function $\Vfun_e$ of Section~\ref{sec:dar:value} is fully determined by the
quality tiers (Table~\ref{tab:supp:tiers}), the per-state dismissal hazard $h_{i,\varphi,e}$
(Table~\ref{tab:supp:hazard}) and scoring distribution $\pi_{r\mid i,\varphi,e}$
(Tables~\ref{tab:supp:pi1}--\ref{tab:supp:pi2}), the incoming-batter tier $\iota(w)$ (the observed
average batting order of Section~\ref{sec:dar:trans}), and the set curve $f(k)$
(Table~\ref{tab:supp:setcurve}). Given these inputs, a single backward sweep of the
recursion~\eqref{eq:dar_bellman} reproduces $\Vfun_e$ exactly.

\begin{table}[H]
\centering
\caption{Quality tiers. A batter is assigned to a tier by his empirical Bayes shrunk scoring rate
$\mu_i$ falling in the stated range; the tier's ball-weighted mean rate $\bar\mu$ and mean hazard
$\bar\eta$ index the transition tables below.}
\label{tab:supp:tiers}
\small
\begin{tabular}{lcccc}
\toprule
Tier & $\mu_i$ range (runs/ball) & $\bar\mu$ & $\bar\eta$ & \#batters \\
\midrule
0 & $\mu_i < 1.016$        & 0.920 & 0.0268 & 1{,}373 \\
1 & $1.016 \le \mu_i < 1.103$ & 1.062 & 0.0273 & 1{,}669 \\
2 & $1.103 \le \mu_i < 1.171$ & 1.140 & 0.0250 & 2{,}011 \\
3 & $1.171 \le \mu_i < 1.241$ & 1.208 & 0.0230 & 2{,}402 \\
4 & $1.241 \le \mu_i < 1.333$ & 1.285 & 0.0218 & 1{,}342 \\
5 & $\mu_i \ge 1.333$         & 1.419 & 0.0244 & \ 943 \\
\bottomrule
\end{tabular}
\end{table}

\begin{table}[H]
\centering
\caption{Per-state dismissal hazard rate $h_{i,\varphi,e}$ (probability of a dismissal on a legal ball) by
tier $i$, phase $\varphi$ and innings $e$.}
\label{tab:supp:hazard}
\small
\begin{tabular}{lcccccc}
\toprule
 & \multicolumn{3}{c}{Innings 1} & \multicolumn{3}{c}{Innings 2} \\
\cmidrule(r){2-4}\cmidrule(r){5-7}
Tier & PP & Mid & Death & PP & Mid & Death \\
\midrule
0 & 0.0527 & 0.0590 & 0.1105 & 0.0539 & 0.0625 & 0.1030 \\
1 & 0.0503 & 0.0535 & 0.1028 & 0.0523 & 0.0565 & 0.1011 \\
2 & 0.0473 & 0.0485 & 0.0978 & 0.0507 & 0.0538 & 0.0922 \\
3 & 0.0435 & 0.0448 & 0.0910 & 0.0458 & 0.0488 & 0.0875 \\
4 & 0.0421 & 0.0446 & 0.0861 & 0.0435 & 0.0462 & 0.0796 \\
5 & 0.0443 & 0.0461 & 0.0823 & 0.0481 & 0.0489 & 0.0812 \\
\bottomrule
\end{tabular}
\end{table}

\begin{table}[H]
\centering
\caption{Scoring distribution $\pi_{r\mid i,\varphi,e}$ in first innings, by tier and phase.}
\label{tab:supp:pi1}
\footnotesize
\begin{tabular}{llcccccc}
\toprule
Tier & Phase & $0$ & $1$ & $2$ & $3$ & $4$ & $6$ \\
\midrule
0 & PP    & 0.620 & 0.242 & 0.050 & 0.005 & 0.075 & 0.008 \\
0 & Mid   & 0.510 & 0.371 & 0.061 & 0.003 & 0.044 & 0.012 \\
0 & Death & 0.448 & 0.383 & 0.085 & 0.005 & 0.057 & 0.021 \\
\addlinespace
1 & PP    & 0.553 & 0.265 & 0.050 & 0.006 & 0.108 & 0.017 \\
1 & Mid   & 0.421 & 0.431 & 0.065 & 0.003 & 0.058 & 0.021 \\
1 & Death & 0.384 & 0.404 & 0.091 & 0.005 & 0.078 & 0.037 \\
\addlinespace
2 & PP    & 0.526 & 0.270 & 0.050 & 0.006 & 0.126 & 0.022 \\
2 & Mid   & 0.374 & 0.453 & 0.071 & 0.004 & 0.069 & 0.029 \\
2 & Death & 0.348 & 0.403 & 0.098 & 0.004 & 0.095 & 0.051 \\
\addlinespace
3 & PP    & 0.502 & 0.278 & 0.050 & 0.006 & 0.138 & 0.026 \\
3 & Mid   & 0.340 & 0.468 & 0.074 & 0.004 & 0.078 & 0.036 \\
3 & Death & 0.304 & 0.410 & 0.105 & 0.004 & 0.109 & 0.067 \\
\addlinespace
4 & PP    & 0.478 & 0.284 & 0.049 & 0.006 & 0.149 & 0.034 \\
4 & Mid   & 0.321 & 0.473 & 0.073 & 0.004 & 0.086 & 0.043 \\
4 & Death & 0.283 & 0.401 & 0.108 & 0.004 & 0.120 & 0.083 \\
\addlinespace
5 & PP    & 0.460 & 0.271 & 0.049 & 0.006 & 0.163 & 0.051 \\
5 & Mid   & 0.318 & 0.454 & 0.071 & 0.003 & 0.092 & 0.063 \\
5 & Death & 0.282 & 0.374 & 0.105 & 0.003 & 0.126 & 0.110 \\
\bottomrule
\end{tabular}
\end{table}

\begin{table}[H]
\centering
\caption{Scoring distribution $\pi_{r\mid i,\varphi,e}$ in second innings, by tier and phase.}
\label{tab:supp:pi2}
\footnotesize
\begin{tabular}{llcccccc}
\toprule
Tier & Phase & $0$ & $1$ & $2$ & $3$ & $4$ & $6$ \\
\midrule
0 & PP    & 0.622 & 0.235 & 0.046 & 0.005 & 0.081 & 0.011 \\
0 & Mid   & 0.516 & 0.363 & 0.060 & 0.002 & 0.045 & 0.014 \\
0 & Death & 0.501 & 0.354 & 0.066 & 0.003 & 0.055 & 0.020 \\
\addlinespace
1 & PP    & 0.544 & 0.270 & 0.049 & 0.005 & 0.114 & 0.019 \\
1 & Mid   & 0.429 & 0.420 & 0.064 & 0.004 & 0.061 & 0.022 \\
1 & Death & 0.419 & 0.392 & 0.075 & 0.004 & 0.076 & 0.034 \\
\addlinespace
2 & PP    & 0.519 & 0.276 & 0.048 & 0.005 & 0.127 & 0.025 \\
2 & Mid   & 0.392 & 0.441 & 0.065 & 0.003 & 0.069 & 0.030 \\
2 & Death & 0.380 & 0.391 & 0.084 & 0.005 & 0.094 & 0.045 \\
\addlinespace
3 & PP    & 0.493 & 0.282 & 0.049 & 0.006 & 0.140 & 0.030 \\
3 & Mid   & 0.359 & 0.454 & 0.071 & 0.003 & 0.078 & 0.036 \\
3 & Death & 0.338 & 0.405 & 0.092 & 0.003 & 0.104 & 0.058 \\
\addlinespace
4 & PP    & 0.472 & 0.283 & 0.048 & 0.005 & 0.152 & 0.039 \\
4 & Mid   & 0.334 & 0.461 & 0.072 & 0.004 & 0.088 & 0.042 \\
4 & Death & 0.312 & 0.395 & 0.098 & 0.003 & 0.119 & 0.072 \\
\addlinespace
5 & PP    & 0.456 & 0.263 & 0.049 & 0.005 & 0.168 & 0.058 \\
5 & Mid   & 0.328 & 0.440 & 0.070 & 0.003 & 0.094 & 0.065 \\
5 & Death & 0.309 & 0.373 & 0.092 & 0.003 & 0.122 & 0.101 \\
\bottomrule
\end{tabular}
\end{table}

\begin{table}[H]
\centering
\caption{The set curve $f(k)$: the fraction of the settled scoring rate realised over a batter's
first $k$ deliveries, used in the freshness term~\eqref{eq:delta_fresh}.}
\label{tab:supp:setcurve}
\small
\begin{tabular}{l ccccccccccc}
\toprule
$k$    & 1 & 2 & 3 & 4 & 5 & 6 & 8 & 10 & 15 & 20 & 30 \\
\midrule
$f(k)$ & 0.623 & 0.688 & 0.734 & 0.771 & 0.802 & 0.830 & 0.876 & 0.912 & 0.970 & 1.003 & 1.042 \\
\bottomrule
\end{tabular}
\end{table}

The centered Dismissal Adjusted Runs additionally uses the league dismissal hazard $\bar
h_{\varphi,e}$. Table~\ref{tab:supp:hbar} gives it, and also reports how the hazard varies across
the major T20 leagues. The assembly of Impact then follows Algorithm~\ref{alg:impact}.

\begin{table}[H]
\centering
\caption{Empirical dismissal hazard $\bar h_{\varphi,e}$ (probability of a dismissal on a legal
ball) by phase and innings, within each of ten major T20 leagues and pooled over the full dataset.
The pooled values (final block) are the baseline $\bar h_{\varphi,e}$ that centers Dismissal Adjusted Runs
(Section~\ref{sec:dar}).}
\label{tab:supp:hbar}
\small
\begin{tabular}{llccc}
\toprule
League & Innings & Powerplay & Middle & Death \\
\midrule
\multirow{2}{*}{Indian Premier League}          & 1 & 0.0386 & 0.0417 & 0.0844 \\
                              & 2 & 0.0415 & 0.0446 & 0.0820 \\
\midrule
\multirow{2}{*}{Big Bash League}          & 1 & 0.0414 & 0.0441 & 0.0904 \\
                              & 2 & 0.0451 & 0.0489 & 0.0871 \\
\midrule
\multirow{2}{*}{Caribbean Premier League}          & 1 & 0.0440 & 0.0444 & 0.0899 \\
                              & 2 & 0.0457 & 0.0497 & 0.0833 \\
\midrule
\multirow{2}{*}{Pakistan Super League}          & 1 & 0.0387 & 0.0465 & 0.0916 \\
                              & 2 & 0.0457 & 0.0487 & 0.0904 \\
\midrule
\multirow{2}{*}{SA20}         & 1 & 0.0474 & 0.0448 & 0.0837 \\
                              & 2 & 0.0466 & 0.0547 & 0.0927 \\
\midrule
\multirow{2}{*}{Major League Cricket}          & 1 & 0.0513 & 0.0429 & 0.0860 \\
                              & 2 & 0.0464 & 0.0562 & 0.0798 \\
\midrule
\multirow{2}{*}{Syed Mushtaq Ali Trophy}         & 1 & 0.0454 & 0.0446 & 0.0911 \\
                              & 2 & 0.0425 & 0.0450 & 0.0820 \\
\midrule
\multirow{2}{*}{Super Smash}  & 1 & 0.0446 & 0.0483 & 0.0884 \\
                              & 2 & 0.0519 & 0.0519 & 0.0868 \\
\midrule
\multirow{2}{*}{Lanka Premier League}          & 1 & 0.0478 & 0.0486 & 0.0941 \\
                              & 2 & 0.0528 & 0.0512 & 0.0845 \\
\midrule
\multirow{2}{*}{T20 Blast}    & 1 & 0.0445 & 0.0461 & 0.0886 \\
                              & 2 & 0.0470 & 0.0505 & 0.0886 \\
\midrule
\multirow{2}{*}{Full dataset} & 1 & 0.0448 & 0.0471 & 0.0905 \\
                              & 2 & 0.0475 & 0.0502 & 0.0876 \\
\bottomrule
\end{tabular}
\end{table}

\newcommand{\pRAE}{\mathrm{pRAE}}
\newcommand{\mRAE}{\mathrm{mRAE}}
\section{Extending RAE with ball-tracking covariates}
\label{sec:supp:mrae}

The $\RAE$ model of Section~\ref{sec:rae} conditions only on covariates recorded for every delivery in our dataset,
so that it applies identically to every match and keeps players comparable across eras and
conditions. That choice is deliberate, but it leaves information unused: a
boundary struck off a yorker is a finer piece of batting than the same boundary off a half-volley,
yet the baseline model, blind to the delivery's line and length, scores the two alike. Where
ball-tracking data exist, we can refine the expectation to the exact delivery faced, and so isolate a
batter's shot-production more sharply. This section develops that refinement, measures how much it moves the evaluation, and says why
we keep it out of the headline metric and apply it to batting only.

To keep the two metrics separate we name them distinctly. We write $\pRAE$ (primary $\RAE$)
for the metric of the main paper and $\mRAE$ (modified $\RAE$) for the extension. $\pRAE$ is the residual $r_b-\mu(\bm x_b)$ against the
expectation built from fields recorded for every delivery (Equation~\eqref{eq:rae_model}), while $\mRAE$ is the residual against an
expectation that additionally conditions on the delivery's line and length. The availability of these two data tiers is described in Section~\ref{sec:data}. Because the two use
different covariate sets and, since tracking is partial, are defined on different samples, the
distinct notation guards against conflating a player's $\pRAE$ (over all his deliveries) with his
$\mRAE$ (over his tracked deliveries only).

\subsection{Ball-tracking data: covariates and coverage}
\label{sec:supp:mrae:data}

The ball-tracking tier records two kinds of variable. Delivery variables describe the ball
the bowler sent down, before the batter played it: its length (Yorker, Full Toss, Half Volley, Good
Length, Back of Length, Short, Bouncer), its line (On Stumps, Outside Off, Down Leg and their wide
variants), and the bowler's variation. Response variables describe what the batter then did:
whether the shot was controlled, the stroke played, its elevation, and the area to which the ball
went. Only the delivery variables are candidates for an expected-runs model, for the reason given in
Section~\ref{sec:supp:mrae:choose}.

Coverage is partial and uneven across competitions (Table~\ref{tab:supp:cov}). Line and length
are present for about one-third of all legal deliveries in men's T20 matches and only a quarter of men's T20 internationals, but for a
majority of the well-resourced franchise leagues: $57\%$ of the IPL, $66\%$ of the Big Bash and
around three quarters of the CPL and SA20 in our dataset carry ball-tracking data, while some competitions are barely tracked. Coverage is
also patchy within a competition: the T20 Blast carries length for one-third of its deliveries but line
for almost none. This heterogeneity is exactly why line and length cannot enter the headline metric, at least until ball-tracking data become publicly available across the board.
A residual built on them would be defined on an unpredictable, competition-dependent subset and would
not be comparable across matches, so the extension below is a refinement available
wherever tracking exists, and it sits alongside $\pRAE$ rather than taking its place.

\begin{table}[H]
\centering
\caption{Ball-tracking coverage: percentage of legal deliveries carrying each covariate, overall,
for men's internationals, and for the major franchise leagues.}
\label{tab:supp:cov}
\small
\begin{tabular}{lrcccc}
\toprule
Competition & Legal balls & Length & Line & Control & Shot type \\
\midrule
Overall (all data)        & 2{,}727{,}420 & 33.9 & 29.5 & 39.7 & 16.5 \\
Men's internationals      & 1{,}213{,}059 & 25.2 & 24.2 & 25.4 & 12.4 \\
\midrule
Indian Premier League     & \ 285{,}761 & 57.1 & 57.0 & 57.1 & 32.3 \\
Big Bash League           & \ 147{,}955 & 65.5 & 65.5 & 65.4 & 49.6 \\
Caribbean Premier League  & \ \ 90{,}866 & 75.7 & 75.7 & 75.7 & 22.6 \\
Pakistan Super League     & \ \ 81{,}037 & 60.8 & 60.8 & 60.9 & \ 0.0 \\
SA20                      & \ \ 28{,}048 & 75.5 & 75.5 & 75.4 & \ 0.0 \\
T20 Blast                 & \ 324{,}920 & 33.7 & \ 0.4 & 82.1 & 33.2 \\
Syed Mushtaq Ali Trophy   & \ 154{,}361 & \ 6.4 & \ 6.4 & \ 6.4 & \ 0.0 \\
Super Smash               & \ \ 59{,}327 & 15.1 & 15.1 & 15.0 & \ 0.0 \\
\bottomrule
\end{tabular}
\end{table}

\subsection{Choosing the covariates}
\label{sec:supp:mrae:choose}

An expected runs model may condition only on what is known before the outcome of a ball, which rules out
the response variables at once. Conditioning expected runs on whether a shot was controlled, or on
where the ball was hit, would fold the outcome into its own expectation and explain away the
run-scoring the residual is meant to capture. The objection is not only a conceptual one.
Table~\ref{tab:supp:sig} fits each candidate as a single phase-conditioned factor on top of the
baseline expectation and reports its between-cell signal $\tau^2$, its data-estimated shrinkage
$\kappa$, and the fraction of residual sum of squares it removes ($\Delta$RSS). The response
variables carry by far the largest apparent signal. Shot area alone removes $20\%$ of the residual
variance, with a multiplier range from $0.02$ to $1.75$, because they are proxies for the
result: a ball hit to the boundary explains its four runs tautologically. Admitting them would be
circular, so they are excluded despite, indeed because of, their apparent significance.

\begin{table}[H]
\centering
\caption{Significance of candidate additional covariates, each fitted as a single phase-conditioned
factor on top of the baseline expectation. $\Delta$RSS is the percentage of residual sum of squares
the factor removes on its own coverage subset. Response variables are post-outcome and are excluded
from the model despite their large apparent signal (see text).}
\label{tab:supp:sig}
\small
\begin{tabular}{llrrcrrr}
\toprule
Covariate & Role & Coverage (\%) & Cells & Multiplier range & $\tau^2$ & $\kappa$ & $\Delta$RSS (\%) \\
\midrule
Length    & delivery & 33.9 & 21 & $0.51$--$1.44$ & 0.0455 & \ 37 & 2.61 \\
Line      & delivery & 29.5 & 21 & $0.93$--$1.18$ & 0.0038 & 437 & 0.20 \\
Variation & delivery & 16.5 & 39 & $0.53$--$1.08$ & 0.0061 & 272 & 0.32 \\
\midrule
Control   & response & 39.7 & 15 & $0.01$--$1.17$ & 0.1282 & \ 13 & \ 6.47 \\
Shot type & response & 16.5 & 71 & $0.06$--$1.58$ & 0.1646 & \ 10 & 10.24 \\
Shot area & response & 16.1 & 48 & $0.02$--$1.75$ & 0.3164 & \ \ 5 & 19.93 \\
\bottomrule
\end{tabular}
\end{table}

Among the admissible delivery variables, length is by far the most informative: it removes
$2.6\%$ of the residual variance with a multiplier ranging from $0.49$ on a yorker to $1.46$ on a
short ball. Line is comparatively weak ($0.2\%$, almost all of it in the wide categories), and the
bowler's variation weak too ($0.3\%$) and poorly covered. Adding line to length lifts the removed
variance from $2.63\%$ to $2.74\%$, and adding variation to $2.86\%$. The marginal gains beyond length
are small, and variation's low coverage would further shrink the tracked sample. We therefore adopt
`line and length' as the additional covariates to formulate $\mRAE$.

\subsection{The modified expected runs model}
\label{sec:supp:mrae:model}

The extension is a strict addition to the multiplicative model~\eqref{eq:rae_model}, not a new
mechanism. On the tracked subset we append two factors,
\begin{equation}
\label{eq:mrae}
\tilde\mu(\bm x_b)=\mu(\bm x_b)\,\gamma^{\mathrm{len}}_{(\varphi,\,\ell(b))}\,\gamma^{\mathrm{lin}}_{(\varphi,\,\lambda(b))},
\end{equation}
where $\ell(b)$ and $\lambda(b)$ are the length and line of delivery $b$, each factor is
phase-conditioned like the others, and $\gamma^{\mathrm{len}}$, $\gamma^{\mathrm{lin}}$ are estimated
by the same iterated, empirical Bayes shrunk backfitting of
Sections~\ref{sec:rae:fit}-\ref{sec:rae:shrink}, holding the baseline expectation $\mu(\bm x_b)$ as the offset. The modified and
primary residuals are then
\begin{equation}
\label{eq:mrae_res}
\mRAE_b=r_b-\tilde\mu(\bm x_b),\qquad\qquad \pRAE_b=r_b-\mu(\bm x_b).
\end{equation}
The fitted multipliers (Table~\ref{tab:supp:llmult}) line up with cricketing intuition. A yorker is worth about
half an average ball and a short ball about half again as much, while line matters only through width,
down the leg side or wide outside off. Because length dominates, $\mRAE$ is in effect a length
adjustment, raising the bar on deliveries that are easy to score off and lowering it on the hardest.

\begin{table}[H]
\centering
\caption{Fitted length and line multipliers (phase-averaged).}
\label{tab:supp:llmult}
\small
\begin{tabular}{lc@{\hskip 3em}lc}
\toprule
Length & $\gamma^{\mathrm{len}}$ & Line & $\gamma^{\mathrm{lin}}$ \\
\midrule
Yorker         & 0.490 & Off Stump        & 0.973 \\
Bouncer        & 0.607 & Leg Stump        & 0.982 \\
Good Length    & 0.880 & Outside Off      & 0.986 \\
Back of Length & 0.970 & On Stumps        & 0.988 \\
Half Volley    & 1.188 & Down Leg         & 1.013 \\
Full Toss      & 1.320 & Wide Down Leg    & 1.042 \\
Short          & 1.460 & Wide Outside Off & 1.118 \\
\bottomrule
\end{tabular}
\end{table}

Line and length occupy opposite roles for the two sides of a delivery. For the batter, they provide context: the ball he had to deal with and did not choose, so conditioning on them isolates his
shot-making given the delivery. For the bowler, they measure skill: bowling a yorker or hitting a
hard length is the act by which he suppresses scoring, so conditioning a bowler's residual on his
own line and length subtracts out his main contribution, just as conditioning a batter's
residual on his own shot selection would. The distinction mirrors the opposition factor asymmetry
already in the model (Section~\ref{sec:rae:agg}), where a batter is credited against the bowler he
faced and a bowler against the batter he faced, never against his own quality.

Applied to batters, the extension leaves the ranking essentially intact:
across the $107$ qualifying IPL batters ($\ge400$ tracked balls) the $\pRAE$ and $\mRAE$ per-ball
rates correlate $0.984$, with Spearman rank correlation $0.97$. Applied to bowlers, conditioning on
their own line and length degrades the ranking markedly (Spearman $0.83$) and moves it in the wrong
direction. The bowlers who lose most are the ones whose value is their length and line
control: Rashid Khan ($+0.170\to+0.118$ runs saved per ball), Bhuvneshwar Kumar ($+0.178\to+0.117$),
Sunil Narine ($+0.168\to+0.099$) and Mitchell Santner ($+0.139\to+0.078$), since the model now
treats the `hard to score' deliveries they bowl as the expected state of affairs rather than their
achievement. We therefore define $\mRAE$ for batters only.

\subsection{Primary RAE versus modified RAE}
\label{sec:supp:mrae:results}

Ball tracking is available in the IPL only from the $2015$ season onward, so every leaderboard in
this subsection is computed on the tracked deliveries of the $2015$--$2025$ seasons, and each
player's runs, balls, strike rate, wickets and economy are likewise measured over those tracked
deliveries alone, so that $\pRAE$ and $\mRAE$ are compared on exactly
the same balls. Each table is ordered by the modified metric, and we report two ranks: the
`mRank' is the player's position under $\mRAE$ and the `pRank' is the position he would have held under $\pRAE$ on the same balls. The gap between the two is the clearest single summary of what the tracking refinement changes.

\begin{table}[H]
\centering
\caption{Top IPL batters by $\mRAE$ per ball on the tracked subset ($2015$--$2025$,
$\ge400$ tracked balls), with $\pRAE$ per ball on the same balls, the two ranks, the change
$\Delta=\mRAE-\pRAE$, and runs, balls and strike rate over the tracked deliveries.}
\label{tab:supp:mrael}
\footnotesize
\setlength{\tabcolsep}{4.5pt}
\begin{tabular}{rlrrrrrrr}
\toprule
mRank & Batter & $\mRAE$/Ball & $\pRAE$/Ball & pRank & $\Delta$ & Runs & Balls & SR \\
\midrule
1  & Andre Russell    & $0.321$ & $0.375$ & 1  & $-0.054$ & 2{,}588 & 1{,}475 & 175.5 \\
2  & AB de Villiers   & $0.249$ & $0.270$ & 5  & $-0.021$ & 3{,}102 & 1{,}919 & 161.6 \\
3  & Nicholas Pooran  & $0.220$ & $0.272$ & 4  & $-0.052$ & 2{,}293 & 1{,}357 & 169.0 \\
4  & Phil Salt        & $0.215$ & $0.272$ & 3  & $-0.057$ & 1{,}056 & \ 600 & 176.0 \\
5  & Heinrich Klaasen & $0.204$ & $0.235$ & 7  & $-0.031$ & 1{,}480 & \ 872 & 169.7 \\
6  & Pat Cummins      & $0.192$ & $0.205$ & 10 & $-0.013$ & \ 612 & \ 402 & 152.2 \\
7  & Travis Head      & $0.186$ & $0.245$ & 6  & $-0.059$ & 1{,}146 & \ 674 & 170.0 \\
8  & Sunil Narine     & $0.179$ & $0.310$ & 2  & $-0.131$ & 1{,}733 & 1{,}028 & 168.6 \\
9  & Abhishek Sharma  & $0.176$ & $0.218$ & 8  & $-0.042$ & 1{,}816 & 1{,}114 & 163.0 \\
10 & Liam Livingstone & $0.165$ & $0.209$ & 9  & $-0.044$ & 1{,}051 & \ 662 & 158.8 \\
11 & Glenn Maxwell    & $0.161$ & $0.185$ & 12 & $-0.024$ & 2{,}223 & 1{,}482 & 150.0 \\
12 & Prithvi Shaw     & $0.146$ & $0.161$ & 14 & $-0.015$ & 1{,}884 & 1{,}280 & 147.2 \\
13 & Suryakumar Yadav & $0.132$ & $0.152$ & 16 & $-0.020$ & 4{,}140 & 2{,}776 & 149.1 \\
14 & Shashank Singh   & $0.128$ & $0.157$ & 15 & $-0.029$ & \ 773 & \ 490 & 157.8 \\
15 & Rajat Patidar    & $0.128$ & $0.129$ & 21 & $-0.001$ & 1{,}111 & \ 720 & 154.3 \\
\bottomrule
\end{tabular}
\end{table}

\begin{table}[H]
\centering
\caption{Top IPL bowlers by $-\mRAE$ per ball (runs saved) on the tracked subset ($2015$--$2025$,
$\ge300$ tracked balls), shown only to demonstrate the failure of the extension on bowling
(Section~\ref{sec:supp:mrae:model}).}
\label{tab:supp:mrael_bwl}
\footnotesize
\setlength{\tabcolsep}{4.5pt}
\begin{tabular}{rlrrrrrrr}
\toprule
mRank & Bowler & $-\mRAE$/Ball & $-\pRAE$/Ball & pRank & $\Delta$ & Wkts & Econ & Balls \\
\midrule
1  & Matheesha Pathirana & $0.311$ & $0.301$ & 1  & $+0.010$ & \ 49 & 7.66 & \ 682 \\
2  & Jasprit Bumrah      & $0.293$ & $0.298$ & 2  & $-0.005$ & 190 & 6.93 & 3{,}086 \\
3  & Jofra Archer        & $0.291$ & $0.257$ & 3  & $+0.034$ & \ 60 & 7.49 & 1{,}228 \\
4  & Chris Morris        & $0.160$ & $0.117$ & 11 & $+0.043$ & \ 86 & 7.69 & 1{,}420 \\
5  & Harshal Patel       & $0.150$ & $0.080$ & 19 & $+0.070$ & 155 & 8.37 & 2{,}129 \\
6  & Andrew Tye          & $0.137$ & $0.113$ & 12 & $+0.024$ & \ 48 & 8.25 & \ 684 \\
7  & Nathan Ellis        & $0.136$ & $0.142$ & 8  & $-0.006$ & \ 20 & 8.44 & \ 381 \\
8  & Prasidh Krishna     & $0.135$ & $0.102$ & 14 & $+0.033$ & \ 79 & 8.32 & 1{,}516 \\
9  & Mitchell Marsh      & $0.135$ & $0.079$ & 20 & $+0.056$ & \ 21 & 8.10 & \ 300 \\
10 & Nathan Coulter-Nile & $0.131$ & $0.079$ & 21 & $+0.052$ & \ 49 & 7.49 & \ 815 \\
11 & Lasith Malinga      & $0.130$ & $0.098$ & 15 & $+0.032$ & \ 53 & 8.08 & \ 904 \\
12 & Morne Morkel        & $0.121$ & $0.058$ & 28 & $+0.063$ & \ 22 & 7.70 & \ 384 \\
13 & Mohammed Siraj      & $0.120$ & $0.105$ & 13 & $+0.015$ & 112 & 8.25 & 2{,}307 \\
14 & Rashid Khan         & $0.119$ & $0.171$ & 6  & $-0.052$ & 166 & 6.96 & 3{,}156 \\
15 & Bhuvneshwar Kumar   & $0.117$ & $0.178$ & 4  & $-0.061$ & 165 & 7.71 & 3{,}260 \\
\bottomrule
\end{tabular}
\end{table}

\paragraph{Per-ball rates.} Among batters (Table~\ref{tab:supp:mrael}), the ordering holds broadly but
reshuffles underneath. AB de Villiers rises from pRank $5$ to second under $\mRAE$, while
Sunil Narine falls from pRank $2$ to mRank $8$, the sharpest mover, because much of his scoring
comes off short and wide balls the refined model now expects runs from. The bowler table
(Table~\ref{tab:supp:mrael_bwl}) is the evidence, as mentioned in Section~\ref{sec:supp:mrae:model},
that the same refinement must not be applied to bowling. Conditioning a bowler's residual on his own
line and length lifts almost every bowler's apparent run prevention. Harshal Patel jumps from pRank
$19$ to mRank $5$, and Chris Morris and Morne Morkel climb similarly, while the elite
suppressors fall. Rashid Khan slides from pRank $6$ to mRank $14$ and Bhuvneshwar Kumar from pRank
$4$ to mRank $15$, because the hard lengths that are their skill now count as par. The resulting order does not describe bowling quality.

\paragraph{Cumulative and season totals.} The cumulative and per-season orderings tell the same
story on the batting side, where they are meaningful. Over careers,
(Table~\ref{tab:supp:mrae_batcum}) de Villiers and Russell trade the top two places and Narine slides
from pRank $8$ to mRank $12$. By season, (Table~\ref{tab:supp:mrae_batseas}) Russell's $2019$ and de
Villiers's $2016$ remain the most valuable tracked campaigns under either metric. The bowling
cumulative and season tables (Tables~\ref{tab:supp:mrae_bwlcum} and~\ref{tab:supp:mrae_bwlseas}) are
included for completeness and show the same distortion as the rates: Bumrah, whose value is as much
containment as length, is the one elite name to hold the top; Bhuvneshwar and Rashid each shed
about $200$ runs of apparent value, and Bumrah alone accounts for four of the ten most
valuable bowling seasons.

\begin{table}[H]
\centering
\caption{Top IPL batters by cumulative $\mRAE$ (tracked deliveries, $2015$--$2025$, $\ge400$ balls),
with cumulative $\pRAE$ on the same balls.}
\label{tab:supp:mrae_batcum}
\footnotesize
\setlength{\tabcolsep}{4.5pt}
\begin{tabular}{rlrrrrrrr}
\toprule
mRank & Batter & $\mRAE$ & $\pRAE$ & pRank & $\Delta$ & Runs & Balls & SR \\
\midrule
1  & AB de Villiers   & 479 & 517 & 2  & $-38$  & 3{,}102 & 1{,}919 & 161.6 \\
2  & Andre Russell    & 474 & 554 & 1  & $-80$  & 2{,}588 & 1{,}475 & 175.5 \\
3  & Suryakumar Yadav & 367 & 422 & 3  & $-55$  & 4{,}140 & 2{,}776 & 149.1 \\
4  & Jos Buttler      & 311 & 331 & 7  & $-20$  & 4{,}119 & 2{,}756 & 149.5 \\
5  & Rishabh Pant     & 299 & 357 & 5  & $-58$  & 3{,}546 & 2{,}401 & 147.7 \\
6  & Nicholas Pooran  & 298 & 369 & 4  & $-71$  & 2{,}293 & 1{,}357 & 169.0 \\
7  & Sanju Samson     & 260 & 257 & 10 & $+3$   & 4{,}156 & 2{,}929 & 141.9 \\
8  & Glenn Maxwell    & 238 & 274 & 9  & $-36$  & 2{,}223 & 1{,}482 & 150.0 \\
9  & David Warner     & 238 & 332 & 6  & $-94$  & 4{,}597 & 3{,}243 & 141.8 \\
10 & Abhishek Sharma  & 196 & 243 & 11 & $-47$  & 1{,}816 & 1{,}114 & 163.0 \\
11 & Prithvi Shaw     & 187 & 207 & 12 & $-19$  & 1{,}884 & 1{,}280 & 147.2 \\
12 & Sunil Narine     & 185 & 318 & 8  & $-134$ & 1{,}733 & 1{,}028 & 168.6 \\
\bottomrule
\end{tabular}
\end{table}

\begin{table}[H]
\centering
\caption{Top IPL bowlers by cumulative $-\mRAE$ (runs saved; tracked deliveries, $2015$--$2025$,
$\ge300$ balls), with $-\pRAE$ on the same balls. Included for completeness only
(Section~\ref{sec:supp:mrae:model}). $\Delta=(-\mRAE)-(-\pRAE)$.}
\label{tab:supp:mrae_bwlcum}
\footnotesize
\setlength{\tabcolsep}{4.5pt}
\begin{tabular}{rlrrrrrrr}
\toprule
mRank & Bowler & $-\mRAE$ & $-\pRAE$ & pRank & $\Delta$ & Wkts & Econ & Balls \\
\midrule
1  & Jasprit Bumrah    & 903 & 920 & 1  & $-17$  & 190 & 6.93 & 3{,}086 \\
2  & Bhuvneshwar Kumar & 381 & 581 & 2  & $-200$ & 165 & 7.71 & 3{,}260 \\
3  & Rashid Khan       & 374 & 539 & 4  & $-165$ & 166 & 6.96 & 3{,}156 \\
4  & Jofra Archer      & 357 & 315 & 6  & $+42$  & \ 60 & 7.49 & 1{,}228 \\
5  & Harshal Patel     & 319 & 169 & 11 & $+150$ & 155 & 8.37 & 2{,}129 \\
6  & Sunil Narine      & 319 & 541 & 3  & $-222$ & 135 & 6.98 & 3{,}224 \\
7  & Sandeep Sharma    & 315 & 346 & 5  & $-31$  & 135 & 7.74 & 2{,}713 \\
8  & Mohammed Siraj    & 277 & 242 & 7  & $+35$  & 112 & 8.25 & 2{,}307 \\
9  & Chris Morris      & 227 & 166 & 12 & $+62$  & \ 86 & 7.69 & 1{,}420 \\
10 & Matheesha Pathirana & 212 & 206 & 9 & $+6$  & \ 49 & 7.66 & \ 682 \\
11 & Prasidh Krishna   & 205 & 154 & 13 & $+50$  & \ 79 & 8.32 & 1{,}516 \\
12 & Pat Cummins       & 147 & \ 77 & 23 & $+70$ & \ 80 & 8.51 & 1{,}595 \\
\bottomrule
\end{tabular}
\end{table}

Not every player falls under $\mRAE$ and the direction of the change
is itself informative. Among batters, only eleven of the $107$ qualifiers gain, and they are the
orthodox accumulators who score against good balls rather than by waiting for loose ones: Faf du
Plessis (pRank $26\to$ mRank $15$ on the cumulative list, $+39$ runs), Ruturaj Gaikwad, Shane Watson
and Ajinkya Rahane, so the refinement redistributes credit from those who feast on hittable
deliveries towards those who milk good ones. On the bowling side, the pattern reverses and is far
larger. A majority of qualifying bowlers ($73$ of $132$) \emph{rise}, because conditioning away line
and length compresses the field, flattering the expensive bowler who at least bowled testing areas
and docking the miser who earned his economy through control. That the adjustment moves most bowlers
in the wrong direction is the quantitative case for confining $\mRAE$ to batting.

\begin{table}[!ht]
\centering
\caption{Most valuable IPL batter-seasons by $\mRAE$ (tracked deliveries, $2015$--$2025$, $\ge150$
balls in the season), with $\pRAE$ on the same balls, the two ranks, the per-ball change
$\Delta/\text{Ball}$, and tracked runs, balls and strike rate.}
\label{tab:supp:mrae_batseas}
\footnotesize
\setlength{\tabcolsep}{4.5pt}
\begin{tabular}{rllrrrrrrr}
\toprule
mRank & Batter & Season & $\mRAE$ & $\pRAE$ & pRank & $\Delta/\text{B}$ & Runs & Balls & SR \\
\midrule
1  & Andre Russell   & 2019 & 154 & 170 & 1  & $-0.064$ & 511 & 250 & 204.4 \\
2  & AB de Villiers  & 2016 & 151 & 156 & 3  & $-0.011$ & 686 & 406 & 169.0 \\
3  & Rishabh Pant    & 2018 & 145 & 158 & 2  & $-0.032$ & 679 & 390 & 174.1 \\
4  & David Warner    & 2016 & 130 & 149 & 4  & $-0.033$ & 848 & 559 & 151.7 \\
5  & Abhishek Sharma & 2024 & 128 & 140 & 6  & $-0.053$ & 484 & 237 & 204.2 \\
6  & Travis Head     & 2024 & 127 & 141 & 5  & $-0.045$ & 567 & 296 & 191.6 \\
7  & Suryakumar Yadav& 2023 & 125 & 137 & 7  & $-0.034$ & 605 & 334 & 181.1 \\
8  & AB de Villiers  & 2018 & 103 & 108 & 9  & $-0.017$ & 478 & 273 & 175.1 \\
9  & Prithvi Shaw    & 2021 & 103 & \ 98 & 16 & $+0.016$ & 479 & 301 & 159.1 \\
10 & Priyansh Arya   & 2025 & 103 & 108 & 10 & $-0.017$ & 545 & 299 & 182.3 \\
\bottomrule
\end{tabular}
\end{table}

\begin{table}[H]
\centering
\caption{Most valuable IPL bowler-seasons by $-\mRAE$ (runs saved; tracked deliveries,
$2015$--$2025$, $\ge120$ balls in the season), with $-\pRAE$ on the same balls. Included for
completeness only (Section~\ref{sec:supp:mrae:model}).}
\label{tab:supp:mrae_bwlseas}
\footnotesize
\setlength{\tabcolsep}{4.5pt}
\begin{tabular}{rllrrrrrrr}
\toprule
mRank & Bowler & Season & $-\mRAE$ & $-\pRAE$ & pRank & $\Delta/\text{B}$ & Balls & Wkts & Econ \\
\midrule
1  & Jasprit Bumrah  & 2019 & 149 & 158 & 2  & $-0.023$ & 374 & 21 & 6.48 \\
2  & Jasprit Bumrah  & 2024 & 142 & 159 & 1  & $-0.056$ & 315 & 21 & 6.08 \\
3  & Jasprit Bumrah  & 2025 & 139 & 141 & 4  & $-0.007$ & 284 & 21 & 6.57 \\
4  & Lasith Malinga  & 2015 & 138 & 108 & 10 & $+0.081$ & 362 & 24 & 7.08 \\
5  & Jofra Archer    & 2020 & 133 & 115 & 7  & $+0.053$ & 336 & 20 & 6.29 \\
6  & Prasidh Krishna & 2025 & 124 & 102 & 11 & $+0.061$ & 354 & 26 & 8.08 \\
7  & Rashid Khan     & 2020 & 118 & 138 & 5  & $-0.053$ & 384 & 20 & 5.27 \\
8  & Mohammed Siraj  & 2023 & 116 & 114 & 8  & $+0.005$ & 302 & 20 & 6.81 \\
9  & Jasprit Bumrah  & 2018 & 115 & 117 & 6  & $-0.007$ & 325 & 17 & 6.55 \\
10 & Sunil Narine    & 2022 & 110 & 142 & 3  & $-0.095$ & 336 & 10 & 5.45 \\
\bottomrule
\end{tabular}
\end{table}

Two caveats temper even the batting refinement, and together they are why the main paper headlines
$\pRAE$. First, it is available only on the tracked subset, large for some leagues and negligible for
others (Table~\ref{tab:supp:cov}), so $\mRAE$ cannot serve as a universal, reproducible cross-competition scale.
Second, treating the delivery as exogenous ignores the batter's own hand in it: a fine player
forces bowlers into the loose lengths he then punishes, and $\mRAE$, by adjusting them away, declines
to credit him for it (the confounding noted in Section~\ref{sec:limits}). $\mRAE$ is thus best read as
a sharper, tracking-aware companion to $\pRAE$ on the matches that support it, not as its direct replacement.

\end{document}